\documentclass[11pt,a4paper]{article}

\usepackage[T1]{fontenc}
\usepackage{authblk}
\usepackage{amsmath,amssymb,amsthm,mathtools,mathrsfs}
\numberwithin{equation}{section}
\IfFileExists{newtxtext.sty}{%
  \usepackage{newtxtext,newtxmath}%
}{%
  \usepackage{mathptmx}%
}
\usepackage{microtype}
\usepackage[a4paper,margin=28mm,headheight=14pt]{geometry}
\usepackage{bm}
\usepackage{enumitem}
\usepackage{aliascnt}
\usepackage{booktabs}
\usepackage{float}
\usepackage{array}
\usepackage{tabularx}
\usepackage{longtable}
\usepackage{xcolor}
\usepackage{fancyhdr}
\usepackage{hyperref}
\usepackage[nameinlink,noabbrev]{cleveref}
\usepackage{url}

\hypersetup{
  colorlinks=true,
  linkcolor=blue,
  citecolor=blue,
  urlcolor=blue,
  pdftitle={Regular-Zero Drinfeld--Sokolov Charges in Doubled Four-Dimensional Chern--Simons Theory},
  pdfauthor={WenTao Zeng, YiFan He, Xian Wu}
}

\allowdisplaybreaks
\setlist[itemize]{leftmargin=2em,itemsep=0.2em,topsep=0.3em}
\setlist[enumerate]{leftmargin=2.2em,itemsep=0.2em,topsep=0.3em}

\newtheorem{theorem}{Theorem}[section]
\theoremstyle{plain}
\newaliascnt{proposition}{theorem}
\newtheorem{proposition}[proposition]{Proposition}
\aliascntresetthe{proposition}
\crefname{proposition}{Proposition}{Propositions}
\Crefname{proposition}{Proposition}{Propositions}
\newaliascnt{lemma}{theorem}
\newtheorem{lemma}[theorem]{Lemma}
\aliascntresetthe{lemma}
\crefname{lemma}{Lemma}{Lemmas}
\Crefname{lemma}{Lemma}{Lemmas}
\newaliascnt{corollary}{theorem}

\aliascntresetthe{corollary}
\crefname{corollary}{Corollary}{Corollaries}
\Crefname{corollary}{Corollary}{Corollaries}
\newaliascnt{assumption}{theorem}
\newtheorem{assumption}[theorem]{Assumption}
\aliascntresetthe{assumption}
\crefname{assumption}{Assumption}{Assumptions}
\Crefname{assumption}{Assumption}{Assumptions}
\theoremstyle{definition}
\newaliascnt{definition}{theorem}
\newtheorem{definition}[theorem]{Definition}
\aliascntresetthe{definition}
\crefname{definition}{Definition}{Definitions}
\Crefname{definition}{Definition}{Definitions}
\newaliascnt{example}{theorem}
\newtheorem{example}[theorem]{Example}
\aliascntresetthe{example}
\crefname{example}{Example}{Examples}
\Crefname{example}{Example}{Examples}
\theoremstyle{remark}
\newaliascnt{remark}{theorem}
\newtheorem{remark}[theorem]{Remark}
\aliascntresetthe{remark}
\crefname{remark}{Remark}{Remarks}
\Crefname{remark}{Remark}{Remarks}

\newcommand{\fg}{\mathfrak g}
\newcommand{\fh}{\mathfrak h}
\newcommand{\ff}{\mathfrak f}
\newcommand{\fs}{\mathfrak s}

\newcommand{\cA}{\mathcal A}
\newcommand{\cK}{\mathcal K}
\newcommand{\cL}{\mathscr L}
\newcommand{\Ad}{\operatorname{Ad}}

\newcommand{\Res}{\operatorname*{res}}
\newcommand{\Tr}{\operatorname{Tr}}
\newcommand{\ev}{\operatorname{ev}}
\newcommand{\grad}{\nabla}
\newcommand{\dd}{\mathrm d}
\newcommand{\id}{\mathbf 1}
\newcommand{\weak}{\mathrel{\approx}}
\newcommand{\pair}[2]{\left\langle #1,#2\right\rangle}
\newcommand{\circleint}{\int_{S^1_R}}

\crefname{theorem}{Theorem}{Theorems}
\Crefname{theorem}{Theorem}{Theorems}
\crefname{section}{Section}{Sections}
\Crefname{section}{Section}{Sections}
\crefname{subsection}{Subsection}{Subsections}
\Crefname{subsection}{Subsection}{Subsections}
\crefname{subsubsection}{Subsubsection}{Subsubsections}
\Crefname{subsubsection}{Subsubsection}{Subsubsections}
\crefname{appendix}{Appendix}{Appendices}
\Crefname{appendix}{Appendix}{Appendices}
\crefname{equation}{Equation}{Equations}
\Crefname{equation}{Equation}{Equations}
\crefname{table}{Table}{Tables}
\Crefname{table}{Table}{Tables}
\crefname{figure}{Figure}{Figures}
\Crefname{figure}{Figure}{Figures}
\AtBeginDocument{}
\AtBeginDocument{}
\AtBeginDocument{}
\AtBeginDocument{}
\AtBeginDocument{}

\title{\bfseries Regular-Zero Drinfeld--Sokolov Charges\\[2mm]
       in Doubled Four-Dimensional Chern--Simons Theory}
\author{WenTao Zeng\textsuperscript{1}, YiFan He\textsuperscript{2}, and Xian Wu\textsuperscript{3}}
\affil[1,2,3]{School of Mathematical Sciences, Capital Normal University, BeiJing 100048, P.R.China}
\date{}
\begin{document}
\maketitle
\footnotetext[1]{\scriptsize\textsuperscript{1} E-mail: 2240502162@cnu.edu.cn;\quad \textsuperscript{2}E-mail:2260501029@cnu.edu.cn;\quad \textsuperscript{3}  E-mail:2240502163@cnu.edu.cn}

\begin{abstract}
We study classical local conserved charges obtained after finite-defect reduction of doubled four-dimensional Chern--Simons theory. At a finite simple zero, the inhomogeneous derivative term in the affine gauge transformation induced by the twist function vanishes, so the residue evaluations of the two Gaudin currents transform by the ordinary adjoint action. 
For each simple ideal, we use the Evans-compatible invariant tensors of the principal Drinfeld--Sokolov hierarchy as external algebraic input, construct local circle integrals, and treat the same-zero, distinct-zero, and mixed sectors separately, proving pairwise Poisson commutativity.

Under the explicit assumptions of Stedman's simple-pole/simple-zero boundary Hamiltonian, the periodic trivial-monodromy sector, and boundary weights independent of phase-space variables, the reduced Hamiltonian decomposes term by term into finite-pole, infinity, finite-zero, and residual first-class-constraint contributions, which yields conservation of the constructed charges. 
We distinguish the finite-defect reduced space, the extended space carrying the residual constraint, and the physical quotient. The result is a conditional existence statement for a classical local Abelian Poisson subalgebra; we make no claim of functional independence, maximality, or Liouville completeness. Extensions to higher-order zeros and nontrivial monodromy sectors are outside the scope of this work.
\end{abstract}

\vspace{1em}
\noindent\textbf{Keywords:} four-dimensional Chern--Simons theory; affine Gaudin model; Drinfeld--Sokolov hierarchy; local conserved charges; non-ultralocal integrability.

\clearpage
{\small\tableofcontents}
\newpage

\section{Introduction}

Four-dimensional Chern--Simons theory embeds two-dimensional integrable structures with a spectral parameter into a topological field theory. The doubled construction retains two Gaudin currents and their associated Lax connections; after finite-defect constraint reduction, Stedman gives the phase space, residual first-class constraint, and reduced Hamiltonian \cite{Stedman2026}. 
We study how to construct local Poisson-commuting conserved charges at regular zeros in this reduced structure.

Let the two rational currents be
\begin{equation*}
  \Gamma(x,z)=\sum_{q\in P_N}\frac{\Gamma_q(x)}{z-q},
  \qquad
  \Delta(x,z)=\sum_{q\in P_N}\frac{\Delta_q(x)}{z-q}.
\end{equation*}
They strongly Poisson-commute, but both carry the affine action of the residual $H$ gauge constraint:
\begin{equation*}
  \delta_\mu\Gamma(x,z)
  =\frac{\varphi(z)}{2\pi}\partial_x\mu(x)+[\Gamma(x,z),\mu(x)],
  \qquad
  \delta_\mu\Delta(x,z)
  =\frac{\varphi(z)}{2\pi}\partial_x\mu(x)+[\Delta(x,z),\mu(x)].
\end{equation*}
Thus, at a generic spectral parameter, invariant polynomials still contain an inhomogeneous derivative term and do not directly define physical local charges.

Let $\zeta$ be a finite simple zero of the twist function, namely
\[
  \varphi(\zeta)=0,\qquad \varphi'(\zeta)\neq0.
\]
The corresponding normalized evaluation currents are
\begin{equation*}
  j_\zeta^{(g)}=\frac{2\pi}{\varphi'(\zeta)}\Gamma(\zeta),
  \qquad
  j_\zeta^{(h)}=\frac{2\pi}{\varphi'(\zeta)}\Delta(\zeta).
\end{equation*}
The zero condition removes the derivative term in the affine transformation, while simplicity keeps the normalization finite. Hence the circle integrals obtained by applying any adjoint-invariant polynomial to $j_\zeta^{(g)}$ or $j_\zeta^{(h)}$ are local strong physical charges.

Commutativity follows from the Poisson bracket of evaluation currents at the same point. Its general form is
\begin{equation*}
  \{j_1^a(x),j_2^b(y)\}
  =\alpha f^{ab}{}_{c}\widetilde j^c(x)\delta(x-y)
   +\beta\kappa^{ab}\delta'(x-y),
  \qquad \beta\neq0.
\end{equation*}
The first term vanishes because gradients of invariant polynomials commute; the circle integral of the second term is a total derivative precisely when the invariant tensors satisfy the Evans compatibility condition. The principal Drinfeld--Sokolov hierarchy supplies such compatible tensors indexed by affine exponents \cite{EvansHassanMacKayMountain1999,Evans2001,DrinfeldSokolov1985}. 
The $\delta'$ term for distinct zeros vanishes by the twist identity, while mixed brackets between the two Gaudin sectors vanish identically.

Under the assumptions of finite simple poles, finite simple zeros, periodic trivial monodromy, and phase-space-independent boundary weights, the charge $Q^A$ associated with each simple ideal, sector, and DS label satisfies
\begin{equation*}
  \{C_\infty[\mu],Q^A\}=0,
  \qquad
  \{Q^A,Q^B\}=0,
  \qquad
  \{Q^A,H_{\rm red}\}=0.
\end{equation*}
The last equality is strong for externally prescribed gauge multipliers; field-dependent gauge fixing is understood only on the constraint surface. Fixing the zero, sector, simple ideal, and DS index while varying only the hierarchy integer gives a countable candidate family; nonvanishing, functional independence, and completeness require separate analysis.

The paper is organized as follows. Section 2 gives the reduced Hamiltonian data and conventions; Section 3 derives the residual gauge action in the two Gaudin sectors; Section 4 discusses regular-zero evaluation and physicality; Section 5 derives the Evans criterion from same-point current brackets and connects it to the DS hierarchy; Section 6 completes the Poisson calculations for distinct zeros and mixed sectors; Section 7 proves conservation under the reduced Hamiltonian; 
and Section 8 collects explicit charge formulas and the scope of the result. Explicit $A$-type expansions, low-order calculations, and a nonvanishing witness are given in the appendices.

\section{Reduced Hamiltonian Data and Conventions}\label{sec:setup}

\subsection{Conventions}

Let $G$ and $H$ be connected complex semisimple Lie groups with Lie algebras $\fg$ and $\fh$, and assume $H\subset G$. Relative to a fixed nondegenerate invariant symmetric bilinear form $\langle\cdot,\cdot\rangle$ on $\fg$, assume the reductive orthogonal decomposition
\begin{equation}\label{eq:reductive}
  \fg=\fh\oplus\ff,
  \qquad [\fh,\ff]\subset\ff,
  \qquad \pair{\fh}{\ff}=0,
\end{equation}
We do not impose the symmetric-space condition $[\ff,\ff]\subset\fh$. Choose a basis $\{T_a\}$ of $\fg$ and its dual basis $\{T^a\}$, and write the split Casimir as
\begin{equation*}
  C^{\fg}_{12}=T_a\otimes T^a,
\end{equation*}
The corresponding object for $\fh$ is denoted $C^{\fh}_{12}$. We use the tensor-component notation $X_1=X\otimes\id$ and $X_2=\id\otimes X$ throughout.

For a polynomial $P\in S(\fg^*)$, define its gradient at $X\in\fg$ by the invariant form $\dd P_X(Y)=\pair{\grad P(X)}{Y}$. A symmetric tensor $d^{(m)}\in S^m(\fg^*)$ determines the homogeneous polynomial $P_{d^{(m)}}(X)=d^{(m)}(X,\ldots,X)$; parentheses on indices denote normalized symmetrization.

The reduced spatial manifold is the circle $S^1_R=\mathbb R/(2\pi R\mathbb Z)$. We work only in the periodic loop sector with trivial monodromy: all fields and gauge parameters are smooth and strictly periodic, so integration by parts produces no boundary terms. Nontrivial monodromy or twisted periodicity would require a new check of circle total-derivative terms and is not considered here.

To avoid confusion between reduction levels, let $\mathcal P_{\rm fin}$ be the Poisson space after finite-defect constraints have been imposed strongly, let $\mathcal P_{\rm ext}$ be the space still carrying the residual first-class constraint $C_\infty$, and let
\[
\mathcal P_{\rm phys}=N_{\mathcal P_{\rm ext}}(\mathcal I_\infty)/\mathcal I_\infty
\]
be the physical Poisson algebra obtained by quotienting by the residual constraint ideal $\mathcal I_\infty$. In the main text, a ``strong equality'' is first understood on $\mathcal P_{\rm ext}$, while a ``weak equality'' is understood after imposing $C_\infty\approx0$.

\subsection{Reduced Fields and the Residual First-Class Constraint}

Let $P_N\subset\mathbb C$ be the finite set of poles. The meromorphic one-form is
\begin{equation}\label{eq:twist-one-form}
  \omega=\varphi(z)\,\dd z,
  \qquad
  \varphi(z)=\sum_{q\in P_N}\frac{k_q}{z-q},
\end{equation}
where $k_q\in\mathbb C$ are the pole levels. The regular-zero construction uses the finite simple zeros of $\varphi$.

Following \cite{Stedman2026}, after imposing the finite-defect constraints the reduced variables may be taken as
\begin{equation*}
  (\Gamma_q(x),\gamma_q(x)),
  \qquad
  \Gamma_q:S^1_R\to\fg,
  \qquad
  \gamma_q:S^1_R\to G.
\end{equation*}
Their basic nontrivial brackets are
\begin{align}
  \{\Gamma_{q1}(x),\Gamma_{p2}(y)\}
  &=\delta_{qp}\left(
      [C^{\fg}_{12},\Gamma_{q1}(x)]\delta_{xy}
      -\frac{k_q}{2\pi}C^{\fg}_{12}\delta'_{xy}
    \right),\label{eq:gamma-pole-bracket}\\
  \{\Gamma_{q1}(x),\gamma_{p2}(y)\}
  &=\delta_{qp}\,\gamma_{q2}(y)C^{\fg}_{12}\delta_{xy}.
  \label{eq:gamma-group-bracket}
\end{align}
Define the $\fh$-valued field
\begin{equation}\label{eq:Delta-definition}
  \Delta_q
  :=\left(
       -\frac{k_q}{2\pi}\gamma_q'\gamma_q^{-1}
       +\gamma_q\Gamma_q\gamma_q^{-1}
     \right)^{\fh}.
\end{equation}
The reduced brackets derived in \cite{Stedman2026} are
\begin{align}
  \{\Gamma_{q1}(x),\Delta_{p2}(y)\}&=0,
  \label{eq:mixed-pole-zero}\\
  \{\Delta_{q1}(x),\Delta_{p2}(y)\}
  &=\delta_{qp}\left(
      -[C^{\fh}_{12},\Delta_{q1}(x)]\delta_{xy}
      +\frac{k_q}{2\pi}C^{\fh}_{12}\delta'_{xy}
    \right).
  \label{eq:delta-pole-bracket}
\end{align}
Thus the two current algebras have opposite signs and strongly Poisson-commute.

Introduce the rational Gaudin currents
\begin{equation*}
  \Gamma(x,z)=\sum_{q\in P_N}\frac{\Gamma_q(x)}{z-q},
  \qquad
  \Delta(x,z)=\sum_{q\in P_N}\frac{\Delta_q(x)}{z-q},
\end{equation*}
and the spatial Lax matrices
\begin{equation*}
  \cL_x^{(g)}(x,z)=\frac{2\pi}{\varphi(z)}\Gamma(x,z),
  \qquad
  \cL_x^{(h)}(x,z)=\frac{2\pi}{\varphi(z)}\Delta(x,z).
\end{equation*}

The constraint $C_\infty$ is read directly from these two current algebras:
\begin{equation}\label{eq:C-infinity}
  C_\infty(x)
  =\sum_{q\in P_N}\bigl(\Gamma_q^{\fh}(x)-\Delta_q(x)\bigr)
  \weak0,
\end{equation}
where the superscript $\fh$ denotes the orthogonal projection to $\fh$ along \eqref{eq:reductive}. Because the ultralocal coefficients in \eqref{eq:gamma-pole-bracket} and \eqref{eq:delta-pole-bracket} have opposite signs, the two $\delta'_{xy}$ terms cancel in the self-bracket of $C_\infty$, leaving only the adjoint action,
\begin{equation*}
  \{C_\infty(x),C_\infty(y)\}
  =[C^{\fh}_{12},C_\infty(x)]\,\delta_{xy}.
\end{equation*}
This equality holds as a distributional identity, including the contact term at $x=y$; the central $\delta'_{xy}$ terms from the two current algebras cancel exactly. Hence $C_\infty$ is first class.

Smear it with a smooth periodic test function $\mu:S^1_R\to\fh$,
\begin{equation}\label{eq:C-smeared}
  C_\infty[\mu]
  =\circleint\pair{\mu(x)}{C_\infty(x)}\,\dd x ,
\end{equation}
We use this form below. The symbol $\weak$ denotes weak equality in the Dirac sense \cite{HenneauxTeitelboim1992}, meaning that the equality holds only on the constraint surface; the constructed charges satisfy the stronger property that the bracket of \eqref{eq:C-smeared} with each charge vanishes before imposing the constraint.

The reduced data and constraint \eqref{eq:C-infinity} are taken from \cite{Stedman2026}; the four-dimensional symplectic form and the Dirac reduction of finite defects are not repeated here. All Poisson calculations are performed on $\mathcal P_{\rm fin}$, and $C_\infty$ is then treated as the residual first-class constraint on $\mathcal P_{\rm ext}$.

\section{Residual Gauge Action in the Two Gaudin Sectors}\label{sec:gauge}

The transformation in the $\Gamma$ sector follows from the reduced Hamiltonian analysis:
\begin{equation}\label{eq:Gamma-gauge-action}
  \{C_\infty[\mu],\Gamma(x,z)\}
  =\frac{\varphi(z)}{2\pi}\mu'(x)+[\Gamma(x,z),\mu(x)].
\end{equation}
We now derive the corresponding formula for $\Delta$ directly from the reduced brackets. Fix a pole $p$. By \eqref{eq:mixed-pole-zero}, $\{\Gamma_q,\Delta_p\}=0$, so only the $-\Delta_q$ terms in \eqref{eq:C-infinity} contribute. Contract the pole bracket \eqref{eq:delta-pole-bracket} with $\mu_1(x)$ and integrate the $\delta'_{xy}$ term by parts:

\begin{align}
  \{C_\infty[\mu],\Delta_{p2}(y)\}
  &=-\int\dd x\,\pair{\mu_1(x)}{\{\Delta_{p1}(x),\Delta_{p2}(y)\}}
  \nonumber\\
  &=-\int\dd x\,\pair{\mu_1(x)}{
      -[C^{\fh}_{12},\Delta_{p1}(x)]\delta_{xy}
      +\frac{k_p}{2\pi}C^{\fh}_{12}\delta'_{xy}}
  \nonumber\\
  &=[\Delta_p(y),\mu(y)]+\frac{k_p}{2\pi}\mu'(y).
  \label{eq:Delta-pole-gauge}
\end{align}

Multiplying by $(z-p)^{-1}$ and summing over poles, the second term reorganizes into $\varphi(z)$ by \eqref{eq:twist-one-form}, hence

\begin{equation}\label{eq:Delta-gauge-action}
  \{C_\infty[\mu],\Delta(x,z)\}
  =\frac{\varphi(z)}{2\pi}\mu'(x)+[\Delta(x,z),\mu(x)].
\end{equation}

The current algebra of $\Delta$ has the opposite sign to that of $\Gamma$, but the affine derivative terms are identical: the $+k_p/2\pi$ in \eqref{eq:Delta-pole-gauge} comes from the ultralocal coefficient with the opposite sign in \eqref{eq:delta-pole-bracket}, and the two sign reversals cancel. This is why the two sectors fit into the same affine $H$-gauge action. 
Combining \eqref{eq:Gamma-gauge-action} and \eqref{eq:Delta-gauge-action} and multiplying by $2\pi/\varphi(z)$ gives
\begin{equation*}
  \{C_\infty[\mu],\cL_x^{(s)}(x,z)\}
  =\partial_x\mu(x)+[\cL_x^{(s)}(x,z),\mu(x)],
  \qquad s\in\{g,h\}.
\end{equation*}
Thus both Lax matrices transform as spatial gauge connections. At a generic value of $z$, however, polynomials in $\Gamma(z)$ or $\Delta(z)$ are not invariant because of the derivative term.

\section{Regular-Zero Evaluation and Physical Local Observables}\label{sec:zeros}

Define
\begin{equation*}
  Z_{\mathrm{fin}}
  =\{\zeta\in\mathbb C\setminus P_N\mid\varphi(\zeta)=0\}.
\end{equation*}
We restrict to finite simple zeros.

\begin{assumption}[Regular simple zeros]\label{ass:simple-zero}
For every $\zeta\in Z_{\mathrm{fin}}$ considered below,
\begin{equation*}
  \varphi(\zeta)=0,
  \qquad \varphi'(\zeta)\neq0.
\end{equation*}
Since $\zeta\notin P_N$, the currents $\Gamma(z)$ and $\Delta(z)$ are regular at $z=\zeta$.
\end{assumption}

Near such a zero,
\begin{equation*}
  \varphi(z)=\varphi'(\zeta)(z-\zeta)+O((z-\zeta)^2).
\end{equation*}
Consequently, each spatial Lax matrix has at most a simple pole at $z=\zeta$; if the corresponding $\Gamma(\zeta)$ or $\Delta(\zeta)$ vanishes, the pole degenerates and disappears.

\begin{definition}[Regular-Zero Residue Currents]\label{def:zero-currents}
For $\zeta\in Z_{\mathrm{fin}}$, set
\begin{align}
  j_\zeta^{(g)}(x)
  &:=\Res_{z=\zeta}\cL_x^{(g)}(x,z)\,\dd z
    =\frac{2\pi}{\varphi'(\zeta)}\Gamma(x,\zeta),
  \label{eq:jg-def}\\
  j_\zeta^{(h)}(x)
  &:=\Res_{z=\zeta}\cL_x^{(h)}(x,z)\,\dd z
    =\frac{2\pi}{\varphi'(\zeta)}\Delta(x,\zeta).
  \label{eq:jh-def}
\end{align}
\end{definition}

The role of regular-zero evaluation is now clear. Evaluating \eqref{eq:Gamma-gauge-action} and \eqref{eq:Delta-gauge-action} at $z=\zeta$, the inhomogeneous term $\frac{\varphi(z)}{2\pi}\mu'(x)$ vanishes because $\varphi(\zeta)=0$, and both equations reduce to the ordinary adjoint action. Multiplying by the nonzero scalar $2\pi/\varphi'(\zeta)$ gives the transformation law of the residue currents

\begin{align}
  \{C_\infty[\mu],j_\zeta^{(g)}(x)\}
  &=[j_\zeta^{(g)}(x),\mu(x)],
  \label{eq:jg-adjoint}\\
  \{C_\infty[\mu],j_\zeta^{(h)}(x)\}
  &=[j_\zeta^{(h)}(x),\mu(x)].
  \label{eq:jh-adjoint}
\end{align}

The vanishing of the affine term uses only $\varphi(\zeta)=0$, whereas the simple-zero condition $\varphi'(\zeta)\neq0$ makes the normalization $2\pi/\varphi'(\zeta)$ finite, so the residue is a well-defined local current rather than a divergent quantity. Evaluation therefore reduces affine gauge covariance to ordinary adjoint covariance, which we denote schematically by

\begin{equation*}
  \ev_\zeta:\quad
  \delta^{\mathrm{aff}}_\mu
  \longmapsto
  \delta^{\mathrm{ad}}_\mu .
\end{equation*}

The starting point is that the invariant-polynomial problem in the constrained theory becomes the ordinary invariant theory of a finite-dimensional Lie algebra after evaluation at a zero. We therefore do not need to classify the full algebra of affine $H$-invariant differential polynomials. We repeatedly use the standard fact that $[\grad P(X),X]=0$ for an adjoint-invariant polynomial, so the gradient lies in the center of the centralizer $\fs^X$; two gradients in the same centralizer therefore commute.

\begin{proposition}[Invariant densities are physical]\label{prop:invariant-physical}
Let $d^{(m)}\in S^m(\fg^*)^{\fg}$ and $e^{(n)}\in S^n(\fh^*)^{\fh}$. For each finite simple zero $\zeta$, the densities
\begin{align}
  \cK_{d,\zeta}^{(g)}(x)
  &=d^{(m)}(j_\zeta^{(g)}(x),\ldots,j_\zeta^{(g)}(x)),\\
  \cK_{e,\zeta}^{(h)}(x)
  &=e^{(n)}(j_\zeta^{(h)}(x),\ldots,j_\zeta^{(h)}(x))
\end{align}
commute pointwise with $C_\infty[\mu]$. Hence their circle integrals strongly Poisson-commute with the residual constraint.
\end{proposition}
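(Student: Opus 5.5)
The plan is to use the Leibniz rule together with the adjoint transformation laws \eqref{eq:jg-adjoint}--\eqref{eq:jh-adjoint}, and then to invoke infinitesimal invariance of the tensors. First I take the $\Gamma$ sector. Since $\cK_{d,\zeta}^{(g)}(x)$ is a polynomial in the components of $j_\zeta^{(g)}(x)$, with no derivatives, the Poisson bracket with $C_\infty[\mu]$ acts as a derivation. This gives
\begin{equation*}
  \{C_\infty[\mu],\cK_{d,\zeta}^{(g)}(x)\}
  =m\,d^{(m)}\bigl(\{C_\infty[\mu],j_\zeta^{(g)}(x)\},j_\zeta^{(g)}(x),\ldots,j_\zeta^{(g)}(x)\bigr)
  =m\,d^{(m)}\bigl([j_\zeta^{(g)}(x),\mu(x)],j_\zeta^{(g)}(x),\ldots\bigr),
\end{equation*}
where I use the symmetry of $d^{(m)}$ and then \eqref{eq:jg-adjoint}.

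Next I invoke $\fg$-invariance of $d^{(m)}$, namely $\sum_i d^{(m)}(X_1,\ldots,[Y,X_i],\ldots,X_m)=0$. Setting all $X_i=X:=j_\zeta^{(g)}(x)$ and $Y=\mu(x)\in\fh\subset\fg$, this reduces to $m\,d^{(m)}([\mu,X],X,\ldots,X)=0$. Hence the bracket vanishes pointwise in $x$. Equivalently, the expression equals $m\langle\grad P_d(X),[X,\mu]\rangle=m\langle[\grad P_d(X),X],\mu\rangle=0$, which is the standard fact recalled above. This holds strongly on $\mathcal P_{\rm ext}$, since \eqref{eq:jg-adjoint} was derived without imposing $C_\infty\approx0$.

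The $\Delta$ sector is handled identically. The one point needing care is that $e^{(n)}$ is only $\fh$-invariant. This is still sufficient, because $j_\zeta^{(h)}$ is $\fh$-valued (each $\Delta_q$ is an $\fh$-projection by \eqref{eq:Delta-definition}) and $\mu$ is $\fh$-valued. Consequently $[j_\zeta^{(h)},\mu]\in\fh$, and $\fh$-invariance applies directly.

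For the integrated statement, I integrate the pointwise identity over $S^1_R$. No integration by parts is needed, since the pointwise bracket is already zero, so the periodicity assumption enters only through the definition of the smeared constraint. I expect the only genuine subtlety to be the derivation property for a bracket between a smeared functional and a local density evaluated at a point. This rests on the fact that \eqref{eq:jg-adjoint} and \eqref{eq:jh-adjoint} are already pointwise identities, the $\delta'$ contributions having been absorbed by the zero condition $\varphi(\zeta)=0$ when passing from \eqref{eq:Gamma-gauge-action} and \eqref{eq:Delta-gauge-action}. With that in hand the argument is purely algebraic.
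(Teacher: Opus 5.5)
Your proposal is correct and follows the paper's proof essentially step for step: the Leibniz rule, the adjoint laws \eqref{eq:jg-adjoint}--\eqref{eq:jh-adjoint}, and infinitesimal invariance with all arguments equal, using $\fh\subset\fg$ in the $g$ sector and $\fh$-invariance directly in the $h$ sector. One small slip: with the paper's convention $\dd P_X(Y)=\pair{\grad P(X)}{Y}$, the quantity $m\,d^{(m)}([X,\mu],X,\ldots,X)$ equals $\pair{\grad P_d(X)}{[X,\mu]}$ without your extra factor of $m$, which is harmless because the quantity vanishes anyway.
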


\begin{proof}
By \eqref{eq:jg-adjoint}--\eqref{eq:jh-adjoint}, the residue currents obey the ordinary adjoint law $\delta_\mu j_\zeta=[j_\zeta,\mu]$. For a symmetric invariant tensor $d^{(m)}$,
\begin{align*}
  \delta_\mu d^{(m)}(j_\zeta,\ldots,j_\zeta)
  &=m\,d^{(m)}([j_\zeta,\mu],j_\zeta,\ldots,j_\zeta)\\
  &=-m\,d^{(m)}([\mu,j_\zeta],j_\zeta,\ldots,j_\zeta)=0,
\end{align*}
where the last equality is infinitesimal adjoint invariance with all arguments equal. In the $g$ sector, $G$-invariance implies $H$-invariance; in the $h$ sector we use $H$-invariance directly. No use is made of $C_\infty\weak0$.
\end{proof}

\begin{remark}
The normalization of a single residue requires a simple zero; the affine term vanishes using only $\varphi(\zeta)=0$. Higher-order zeros require jet or Takiff current constructions and are outside this proposition.
\end{remark}

\section{Regular-Zero Charges and Evans Compatibility}\label{sec:ds}

Section 4 showed that at a finite simple zero the residual gauge transformation of the residue current reduces to the ordinary adjoint action. Thus physicality of local charges requires only invariant polynomials, while Poisson commutativity is controlled by the non-ultralocal term in the evaluation-current bracket. We first compute for general invariant polynomials and then insert the tensors supplied by the principal Drinfeld--Sokolov hierarchy.

\subsection{Candidate Charges, Evaluation Currents, and Physicality}\label{sec:charges}

Fix a finite simple zero $\zeta\in Z_{\rm fin}$. Write
\begin{equation*}
  J_\zeta^{(g)}(x)=\Gamma(x,\zeta),\qquad
  \widetilde J_\zeta^{(g)}(x)
  =-\left.\partial_z\Gamma(x,z)\right|_{z=\zeta},
\end{equation*}
and for the second sector write
\begin{equation*}
  J_\zeta^{(h)}(x)=\Delta(x,\zeta),\qquad
  \widetilde J_\zeta^{(h)}(x)
  =-\left.\partial_z\Delta(x,z)\right|_{z=\zeta}.
\end{equation*}
The evaluation currents use the regular-zero normalization
\begin{equation}\label{eq:zero-current-normalization}
  j_\zeta^{(A)}(x)
  =\frac{2\pi}{\varphi'(\zeta)}J_\zeta^{(A)}(x),
  \qquad
  \widetilde j_\zeta^{(A)}(x)
  =\frac{2\pi}{\varphi'(\zeta)}\widetilde J_\zeta^{(A)}(x),
  \qquad A=g,h.
\end{equation}
The simple-zero condition ensures that this normalization is nonzero.

Let $P\in S^m(\mathfrak s^*)^{\mathfrak s}$ be a homogeneous adjoint-invariant polynomial on the complex semisimple Lie algebra $\mathfrak s$. Define the regular-zero candidate charge
\begin{equation}\label{eq:g-charge}
  Q_{\zeta,P}^{(g)}
  =\circleint P\bigl(j_\zeta^{(g)}(x)\bigr)\,\dd x,
\end{equation}
and in the $h$ sector define
\begin{equation}\label{eq:h-charge}
  Q_{\zeta,P}^{(h)}
  =\circleint P\bigl(j_\zeta^{(h)}(x)\bigr)\,\dd x.
\end{equation}
If $P$ is chosen on one simple ideal, extend it by zero on the complementary ideals. Since $j_\zeta^{(g)}$ is a rational linear combination of the $\Gamma_q$, whereas $j_\zeta^{(h)}$ is
\begin{equation*}
  \frac{2\pi}{\varphi'(\zeta)}
  \sum_{q\in P_N}\frac{1}{\zeta-q}
  \left(
    -\frac{k_q}{2\pi}\gamma_q'\gamma_q^{-1}
    +\gamma_q\Gamma_q\gamma_q^{-1}
  \right)^{\fh},
\end{equation*}
These densities involve only $\Gamma_q,\gamma_q,\gamma_q^{-1},\gamma_q'$ at the same point and are therefore local differential expressions.

At a zero, equations \eqref{eq:jg-adjoint}--\eqref{eq:jh-adjoint} give
\[
  \delta_\mu j_\zeta^{(A)}=[j_\zeta^{(A)},\mu].
\]
Hence, for any invariant polynomial $P$,
\[
  \delta_\mu P(j_\zeta^{(A)})
  =\pair{\grad P(j_\zeta^{(A)})}{[j_\zeta^{(A)},\mu]}=0.
\]
Therefore
\begin{equation}\label{eq:charges-physical}
  \{C_\infty[\mu],Q_{\zeta,P}^{(A)}\}=0
\end{equation}
On the extended Poisson space this is a strong equality. Physicality and Evans compatibility are distinct issues: the former uses only ordinary adjoint covariance at the zero, while the latter controls the $\delta'$ term in the same-point current bracket.

\subsection{The Same-Zero Current Bracket}\label{sec:same-zero-calculation}

For the first sector, summing the pole brackets gives
\begin{equation*}
  \{J_{\zeta1}^{(g)}(x),J_{\zeta2}^{(g)}(y)\}
  =[C^{\fg}_{12},\widetilde J_{\zeta1}^{(g)}(x)]\delta_{xy}
  -\frac{1}{2\pi}\sum_{q\in P_N}\frac{k_q}{(\zeta-q)^2}C^{\fg}_{12}\delta'_{xy}.
\end{equation*}
From
\begin{equation*}
  \varphi'(\zeta)=-\sum_{q\in P_N}\frac{k_q}{(\zeta-q)^2}
\end{equation*}
we obtain
\begin{equation}\label{eq:same-zero-g-bracket}
  \{J_{\zeta1}^{(g)}(x),J_{\zeta2}^{(g)}(y)\}
  =[C^{\fg}_{12},\widetilde J_{\zeta1}^{(g)}(x)]\delta_{xy}
  +\frac{\varphi'(\zeta)}{2\pi}C^{\fg}_{12}\delta'_{xy}.
\end{equation}
The second sector gives the overall sign reversal:
\begin{equation}\label{eq:same-zero-h-bracket}
  \{J_{\zeta1}^{(h)}(x),J_{\zeta2}^{(h)}(y)\}
  =-[C^{\fh}_{12},\widetilde J_{\zeta1}^{(h)}(x)]\delta_{xy}
  -\frac{\varphi'(\zeta)}{2\pi}C^{\fh}_{12}\delta'_{xy}.
\end{equation}

After normalization, the two sectors can be written uniformly as
\begin{equation}\label{eq:generic-zero-current-bracket}
  \{j_1^a(x),j_2^b(y)\}
  =\alpha f^{ab}{}_{c}\,\widetilde j^c(x)\delta_{xy}
   +\beta\kappa^{ab}\delta'_{xy},
  \qquad \beta\neq0,
\end{equation}
Here $\alpha$ and $\beta$ depend only on the overall sector sign and the chosen normalization, and $\widetilde j$ need not equal $j$.

Let $P$ and $R$ be invariant polynomials of degrees $m$ and $n$, respectively, and write
\begin{equation*}
  A_a(x)=\frac{\delta P(X)}{\delta j^a(x)},
  \qquad
  B_b(x)=\frac{\delta R(X)}{\delta j^b(x)}.
\end{equation*}
Invariance gives $[\grad P(X),X]=[\grad R(X),X]=0$. On the Zariski-dense set of regular semisimple $X$, both gradients lie in the same Cartan centralizer, and hence
\begin{equation*}
  [\grad P(X),\grad R(X)]=0.
\end{equation*}
This is a polynomial identity and therefore holds for all $X$. Thus the ultralocal part of the bracket is
\begin{equation}\label{eq:Evans-delta-part}
  \{Q_P,Q_R\}_{\delta}
  =\alpha\circleint
    \pair{[\grad P(j),\grad R(j)]}{\widetilde j}\,\dd x=0.
\end{equation}
A complete derivation of the functional variation, gradient commutativity, and distributional integration is given in Appendix \ref{app:ultralocal-derivation}.

After integrating over $y$, the non-ultralocal part becomes
\begin{align}
  \{Q_P,Q_R\}_{\delta'}
  &=\beta mn(n-1)\circleint
     S_{i_1\cdots i_{N-1}\mid c}
     j^{i_1}\cdots j^{i_{N-1}}\partial_xj^c\,\dd x,
     \label{eq:Evans-delta-prime}
\end{align}
Here $S$ is the coefficient obtained by contracting the tensors of $P$ and $R$. Let $N=m+n-2$ and symmetrize all undifferentiated indices; the Evans condition reads
\begin{equation}\label{eq:Evans-compatibility}
  S_{(i_1\cdots i_{N-1}\mid c)}
  =S_{(i_1\cdots i_{N-1}c)}.
\end{equation}
In the original tensor indices, this is
\begin{equation*}
  d^{(m)}_{(a_1\cdots a_{m-1}}{}^{c}
  e^{(n)}_{b_1\cdots b_{n-2})b_{n-1}c}
  =
  d^{(m)}_{(a_1\cdots a_{m-1}}{}^{c}
  e^{(n)}_{b_1\cdots b_{n-2}b_{n-1})c}.
\end{equation*}

Condition \eqref{eq:Evans-compatibility} is sufficient for \eqref{eq:Evans-delta-prime} to be a total derivative. Conversely, if $\beta\neq0$ and this bracket vanishes for every periodic current $j$, then the polynomial one-form
\begin{equation*}
  \omega
  =S_{i_1\cdots i_{N-1}\mid c}
   j^{i_1}\cdots j^{i_{N-1}}\,\dd j^c
\end{equation*}
Its integral along every closed curve is zero, so $\omega$ is closed. The closedness condition is exactly \eqref{eq:Evans-compatibility}. Thus, on the unrestricted formal local current algebra, the Evans condition is both sufficient and necessary for commutativity of same-point charges.

When \eqref{eq:Evans-compatibility} holds, there is a fully symmetric tensor $T_{i_1\cdots i_N}$ such that
\begin{equation}\label{eq:Evans-total-derivative}
  S_{i_1\cdots i_{N-1}\mid c}
  j^{i_1}\cdots j^{i_{N-1}}\partial_xj^c
  =\frac1N\partial_x
  \left(T_{i_1\cdots i_N}j^{i_1}\cdots j^{i_N}\right).
\end{equation}
Its integral vanishes under periodic boundary conditions. This completes the general same-zero charge calculation.

\subsection{Evans Tensors from the DS Hierarchy}\label{sec:ds-input}

Let $\fs$ be a complex simple Lie algebra of rank $r_\fs$, with exponents $m_1,\ldots,m_{r_\fs}$ and Coxeter number $h_\fs$. The highest-weight, derivative-free Hamiltonian densities of the principal Drinfeld--Sokolov hierarchy give Weyl-invariant tensors on the Cartan subalgebra, with degrees
\begin{equation}\label{eq:affine-exponents}
  s_{a,r}=m_a+r h_\fs,
  \qquad d_{a,r}=s_{a,r}+1.
\end{equation}
Evans' DS analysis shows that these Cartan tensors pairwise satisfy the Cartan version of \eqref{eq:Evans-compatibility}; their current brackets and compatibility condition are the core calculation in Evans' construction of the DS hierarchy. By Chevalley restriction and the Evans lemma under this restriction, they extend to
\begin{equation*}
  d_\fs^{(a,r)}\in
  \left(S^{d_{a,r}}\fs^*\right)^\fs
\end{equation*}
and continue to satisfy the same compatibility condition. We use only the highest derivative-free part, without identifying the full derivative-containing DS Hamiltonian density with a regular-zero charge density.

For a complex semisimple Lie algebra
\begin{equation*}
  \fg=\bigoplus_{I\in\mathcal I_g}\fg_I,
  \qquad
  \fh=\bigoplus_{J\in\mathcal I_h}\fh_J,
\end{equation*}
On each simple ideal take the tensors above and extend them by zero to the complementary ideals. The invariant form is orthogonal between distinct simple ideals and the structure constants have no cross-components, so the Evans condition holds ideal by ideal. We may therefore define
\begin{align*}
  Q_{\zeta,I,a,r}^{(g)}&:=Q_{\zeta,d_I^{(a,r)}}^{(g)},\\
  Q_{\zeta,J,b,r}^{(h)}&:=Q_{\zeta,e_J^{(b,r)}}^{(h)}.
\end{align*}
By \eqref{eq:g-charge}--\eqref{eq:h-charge}, these charges remain local and strongly physical; by the same-zero calculation above, they pairwise Poisson-commute at fixed zero and fixed sector. Repeated exponents retain the principal Heisenberg label, but we make no claim that charges of the same degree are functionally independent.

\section{Poisson Brackets of DS Charges}\label{sec:involution}

Section 5 established Poisson commutativity of DS charges at a common zero within each sector using the Evans compatibility condition. This section treats the two remaining cases: distinct zeros within one sector and charges from different sectors. At distinct zeros, the twist-function identity removes the $\delta'$ term and adjoint invariance cancels the remaining contribution; for mixed sectors, the current brackets vanish identically. Together with the same-zero calculation, these results complete the proof of pairwise Poisson commutativity of the DS charges.

\subsection{Distinct Zeros in the Same Sector}

We first compute the bracket of the Gaudin currents before imposing the zero conditions. For $z,w\notin P_N$ with $z\neq w$, their definitions and \eqref{eq:gamma-pole-bracket} give
\begin{align*}
  \{\Gamma_1(x,z),\Gamma_2(y,w)\}
  &=\sum_{q,p\in P_N}
    \frac{\{\Gamma_{q1}(x),\Gamma_{p2}(y)\}}{(z-q)(w-p)}\\
  &=\sum_{q\in P_N}\frac{1}{(z-q)(w-q)}
    \left([C^{\fg}_{12},\Gamma_{q1}(x)]\delta_{xy}
      -\frac{k_q}{2\pi}C^{\fg}_{12}\delta'_{xy}\right).
\end{align*}
Here the Kronecker delta $\delta_{qp}$ in the pole bracket reduces the double sum to a single sum. Using
\begin{equation*}
  \frac{1}{(z-q)(w-q)}
  =\frac{1}{w-z}\left(\frac{1}{z-q}-\frac{1}{w-q}\right),
\end{equation*}
we obtain
\begin{align*}
  \sum_{q\in P_N}\frac{\Gamma_q(x)}{(z-q)(w-q)}
  &=\frac{\Gamma(x,z)-\Gamma(x,w)}{w-z},\\
  \sum_{q\in P_N}\frac{k_q}{(z-q)(w-q)}
  &=\frac{\varphi(z)-\varphi(w)}{w-z}.
\end{align*}
Thus the full bracket is
\begin{align*}
  \{\Gamma_1(x,z),\Gamma_2(y,w)\}
  =\left[C^{\fg}_{12},
      \frac{\Gamma_1(x,z)-\Gamma_1(x,w)}{w-z}
    \right]\delta_{xy}-\frac{1}{2\pi}\frac{\varphi(z)-\varphi(w)}{w-z}C^{\fg}_{12}\delta'_{xy}.
\end{align*}
Now take $z=\zeta$ and $w=\eta$, where $\zeta,\eta\in Z_{\mathrm{fin}}$ and $\zeta\neq\eta$. Since both are zeros of the twist function,
\begin{equation*}
  \frac{\varphi(\zeta)-\varphi(\eta)}{\eta-\zeta}=0.
\end{equation*}
The $\delta'_{xy}$ term therefore vanishes, leaving
\begin{equation}\label{eq:distinct-zero-g-bracket}
  \{\Gamma_1(x,\zeta),\Gamma_2(y,\eta)\}
  =\left[C^{\fg}_{12},
      \frac{\Gamma_1(x,\zeta)-\Gamma_1(x,\eta)}{\eta-\zeta}
    \right]\delta_{xy}.
\end{equation}

We next compute the bracket of the charges in the notation of \eqref{eq:g-charge}:
\begin{equation*}
  Q_{\zeta,P}^{(g)}=\circleint P(j_\zeta^{(g)}(x))\,\dd x,
  \qquad
  Q_{\eta,R}^{(g)}=\circleint R(j_\eta^{(g)}(x))\,\dd x.
\end{equation*}
Let $P,R$ be adjoint-invariant polynomials, degrees of homogeneity are m and n, and denote their Lie-algebra-valued gradients by
\begin{equation*}
  A(x)=\grad P(j_\zeta^{(g)}(x)),\qquad
  B(x)=\grad R(j_\eta^{(g)}(x)).
\end{equation*}
The chain rule contracts the current bracket with $A(x)$ in the first tensor factor and $B(y)$ in the second. To evaluate this contraction, expand $C^{\fg}_{12}=\sum_aT_a\otimes T^a$. For any $U\in\fg$,
\begin{align*}
  [C^{\fg}_{12},U_1]&=\sum_a[T_a,U]\otimes T^a,\\
  \sum_a\pair{A}{[T_a,U]}\pair{B}{T^a}
  &=\pair{A}{[B,U]}=\pair{[A,B]}{U}.
\end{align*}
The second line uses $\sum_a\pair{B}{T^a}T_a=B$ and invariance of the bilinear form. Applying this identity to \eqref{eq:distinct-zero-g-bracket}, using the normalization \eqref{eq:zero-current-normalization}, and integrating the delta distribution over $y$ gives
\begin{equation*}
  \{Q_{\zeta,P}^{(g)},Q_{\eta,R}^{(g)}\}
  =\frac{(2\pi)^{2}}{\varphi'(\zeta)\varphi'(\eta)(\eta-\zeta)}
    \circleint\pair{[A(x),B(x)]}
      {\Gamma(x,\zeta)-\Gamma(x,\eta)}\,\dd x.
\end{equation*}
The two factors $2\pi/\varphi'(\zeta)$ and $2\pi/\varphi'(\eta)$ come from the two normalized residue currents; they are independent of the phase-space variables.

At each spatial point, adjoint invariance of $P$ implies, for every $\xi\in\fg$,
\begin{equation*}
  0=\dd P_{j_\zeta^{(g)}}([\xi,j_\zeta^{(g)}])
   =\pair{A}{[\xi,j_\zeta^{(g)}]}
   =\pair{[j_\zeta^{(g)},A]}{\xi}.
\end{equation*}
By nondegeneracy, $[j_\zeta^{(g)},A]=0$; similarly, $[j_\eta^{(g)},B]=0$. Since the normalization constants are nonzero, these identities also give $[\Gamma(x,\zeta),A(x)]=[\Gamma(x,\eta),B(x)]=0$. Consequently,
\begin{align*}
  \pair{[A,B]}{\Gamma(x,\zeta)-\Gamma(x,\eta)}=\pair{B}{[\Gamma(x,\zeta),A]}
    -\pair{A}{[B,\Gamma(x,\eta)]}=0,
\end{align*}
and therefore
\begin{equation*}
  \{Q_{\zeta,P}^{(g)},Q_{\eta,R}^{(g)}\}=0
    \qquad(\zeta\neq\eta).
\end{equation*}
The cancellation uses the commutation of each gradient with its own current, without assuming $[A,B]=0$. It holds pointwise, so neither integration by parts nor the Evans condition is needed for distinct zeros.

For the $h$ sector, \eqref{eq:delta-pole-bracket} gives the same calculation with $\Gamma,\fg$ replaced by $\Delta,\fh$ and an overall minus sign. Hence $\{Q_{\zeta,P}^{(h)},Q_{\eta,R}^{(h)}\}=0$ as well. Taking $P,R$ to be the DS polynomials completes the distinct-zero calculation in each sector.

\subsection{Mixed Gaudin Sectors}

The polewise relation \eqref{eq:mixed-pole-zero} gives
\begin{equation}\label{eq:mixed-current-zero}
  \{\Gamma_1(x,z),\Delta_2(y,w)\}=0,
\end{equation}
Consequently, brackets between charges in the $g$ and $h$ sectors vanish for every finite simple zero and every DS label. No further use of the Evans condition is needed.

Combining the same-zero calculation of Section 5, the distinct-zero calculation of this section, and strong commutativity of the mixed sectors covers every possible pair of DS charges. The next section combines this Poisson conclusion with the four-term decomposition of the standard reduced Hamiltonian.

\section{Reduced Hamiltonian and Conservation}\label{sec:hamiltonian}

This section translates Stedman's standard boundary Hamiltonian into our notation and checks conservation of regular-zero DS charges term by term. For any charge $Q^A$ constructed above, the goal is to prove
\begin{equation}\label{eq:conservation-goal}
  \{Q^A,H_{\rm red}\}=0.
\end{equation}
Since these charges have no explicit time dependence, \eqref{eq:conservation-goal} is equivalent to
$\dd Q^A/\dd t=0$.

The standard reduced Hamiltonian decomposes as
\begin{equation*}
  H_{\rm red}
  =H_{\rm fin}+H_\infty+H_Z-C_\infty[a],
\end{equation*}
It is therefore enough to show that $Q^A$ has zero bracket with each of the four terms on the right. The only new calculations are for the finite-pole term $H_{\rm fin}$ and the infinity term $H_\infty$; the zero term $H_Z$ already belongs to the commuting DS algebra established above, while the constraint term is handled directly by the strong physicality proved earlier.

\subsection{Decomposition of the Physical Hamiltonian}\label{sec:physical-hamiltonian}

We now fix the precise definition of Stedman's reduced Hamiltonian in our notation and verify conservation. Unless stated otherwise, assume $k_q\neq0$, $K:=\sum_{q\in P_N}k_q\neq0$, all $\epsilon_q,\epsilon_\infty,\alpha_\zeta,\beta_\zeta$ are external constants independent of phase-space variables, and use the periodic trivial-monodromy sector of Section 2.

\begin{definition}[Residue Pairing at a Zero]\label{def:residue-pairing}
If $F(z),G(z)$ are meromorphic near $z=\zeta$, define
\[
\langle\!\langle F,G\rangle\!\rangle_{(\zeta)}
:=\operatorname*{Res}_{z=\zeta}
\varphi(z)\pair{F(z)}{G(z)}\,\dd z .
\]
Therefore
\[
\langle\!\langle \varphi^{-2}\Gamma,\Gamma\rangle\!\rangle_{(\zeta)}
:=\operatorname*{Res}_{z=\zeta}
\frac{\pair{\Gamma(z)}{\Gamma(z)}}{\varphi(z)}\,\dd z .
\]
\end{definition}

Using $\varphi(z)=d_\zeta(z-\zeta)+O((z-\zeta)^2)$ and regularity of $\Gamma(z)$ at $\zeta$ gives directly
\begin{equation}\label{eq:residue-zero-evaluation}
\operatorname*{Res}_{z=\zeta}
\frac{\pair{\Gamma(z)}{\Gamma(z)}}{\varphi(z)}\,\dd z
=\frac{\pair{\Gamma(\zeta)}{\Gamma(\zeta)}}{d_\zeta},
\qquad d_\zeta:=\varphi'(\zeta).
\end{equation}
$\Delta(z)$ has exactly the same formula.

\begin{table}[H]
\centering
\caption{Correspondence between Stedman's reduced Hamiltonian and our notation}
\begin{tabularx}{0.96\textwidth}{>{\raggedright\arraybackslash}p{0.30\textwidth} >{\raggedright\arraybackslash}p{0.30\textwidth} X}
\toprule
Stedman notation & Our notation & Role\\
\midrule
$\Gamma_q$ & $\Gamma_q$ & First Gaudin sector\\
$\Delta_q$ & $\Lambda_q^{\fh}$ & Second Gaudin sector\\
$C_\infty$ & $\sum_q(\Gamma_q^{\fh}-\Delta_q)$ & Residual first-class constraint\\
$\mathcal N_0$ & $-a$ (with the sign convention of this section) & Constraint multiplier\\
$\langle\!\langle\cdot,\cdot\rangle\!\rangle_{(\zeta)}$ & \eqref{eq:residue-zero-evaluation} & Zero Hamiltonian term\\
\bottomrule
\end{tabularx}
\end{table}

Define
\begin{equation}\label{eq:lax-Lambda}
  \Lambda_q:=\gamma_q\Gamma_q\gamma_q^{-1}
      -c_q\gamma_q'\gamma_q^{-1},
  \qquad c_q=\frac{k_q}{2\pi},
  \qquad \Delta_q=\Lambda_q^{\fh}.
\end{equation}
and write
\[
J_q^L=c_q\gamma_q^{-1}\gamma_q',
\qquad J_q^R=c_q\gamma_q'\gamma_q^{-1},
\qquad \Lambda_q=\gamma_q(\Gamma_q-J_q^L)\gamma_q^{-1}.
\]

Stedman's finite constraints $C_q=0$ have already been imposed strongly on $\mathcal P_{\rm fin}$, so the linear constraint-multiplier term in (98) no longer appears; the residual $\mathcal N_0 C_\infty$ term is retained as $-C_\infty[a]$. Hence the finite-pole term is
\begin{equation}\label{eq:lax-Hfin}
  H_{\rm fin}
  =\sum_{q\in P_N}\frac{\pi\epsilon_q}{k_q}
    \circleint\pair{\Lambda_q^{\ff}}{\Lambda_q^{\ff}}\,\dd x.
\end{equation}
Let
\[
S:=\sum_{q\in P_N}\Gamma_q,\qquad
A_0:=\frac{2\pi}{K}S,\qquad k_\infty=-K,
\]
The infinity term is
\begin{equation}\label{eq:lax-Hinfty}
  H_\infty
  =-\frac{\pi\epsilon_\infty}{K}
    \circleint\pair{S^\ff}{S^\ff}\,\dd x.
\end{equation}

By \eqref{eq:residue-zero-evaluation} and \eqref{eq:jg-def}--\eqref{eq:jh-def}, the boundary contribution at a finite simple zero is
\begin{equation}\label{eq:lax-HZ}
  H_Z
  =\sum_{\zeta\in Z_{\rm fin}}\frac{d_\zeta}{4\pi}
    \circleint\left(
      \alpha_\zeta\pair{j_\zeta^{(g)}}{j_\zeta^{(g)}}
      -\beta_\zeta\pair{j_\zeta^{(h)}}{j_\zeta^{(h)}}
    \right)\dd x.
\end{equation}
Equivalently,
\begin{equation*}
H_Z
=\pi\sum_{\zeta\in Z_{\rm fin}}\frac1{d_\zeta}
\circleint\left(
\alpha_\zeta\pair{\Gamma(\zeta)}{\Gamma(\zeta)}
-\beta_\zeta\pair{\Delta(\zeta)}{\Delta(\zeta)}
\right)\dd x.
\end{equation*}
Define
\begin{equation*}
\widehat Q_\zeta^{(g)}
:=\frac12\circleint\pair{j_\zeta^{(g)}}{j_\zeta^{(g)}}\,\dd x,\qquad
\widehat Q_\zeta^{(h)}
:=\frac12\circleint\pair{j_\zeta^{(h)}}{j_\zeta^{(h)}}\,\dd x.
\end{equation*}
Then
\begin{equation}\label{eq:lax-HZ-quadratic}
  H_Z
  =\sum_{\zeta\in Z_{\rm fin}}\frac{d_\zeta}{2\pi}
    \left(\alpha_\zeta\widehat Q_\zeta^{(g)}
    -\beta_\zeta\widehat Q_\zeta^{(h)}\right).
\end{equation}

In summary, the standard reduced Hamiltonian used in this section is
\begin{equation}\label{eq:lax-Hred}
  H_{\rm red}=H_{\rm fin}+H_\infty+H_Z-C_\infty[a].
\end{equation}
Equation \eqref{eq:lax-Hred} is Stedman's equation (139) after imposing the finite constraints, setting $k_\infty=-K$, and taking $\mathcal N_0=-a$; we do not extend it to field-dependent boundary weights here.

\subsection{Commutation of the Finite-Pole Term with DS Charges}

The finite-pole term appears unrelated to zero charges, but we must check that it does not indirectly drive the zero currents through the group variables $\gamma_q$. The key strong identity is the following.

\begin{lemma}[Common Commutativity of the Finite-Pole Energy]\label{lem:Hfin-commutes}
In the finite-defect reduced Poisson algebra,
\begin{equation}\label{eq:lax-Gamma-Lambda-short}
  \{\Gamma_{q1}(x),\Lambda_{p2}(y)\}=0.
\end{equation}
In addition, for any $\fh$-valued test function $\eta$,
\begin{equation}\label{eq:Delta-Lambda-f-action}
  \{\Delta_q[\eta],\Lambda_q^{\ff}\}
  =[\eta,\Lambda_q^{\ff}].
\end{equation}
Therefore
\begin{equation}\label{eq:lax-Hfin-commutant}
  \{\Gamma(z),H_{\rm fin}\}=0,
  \qquad
  \{\Delta(z),H_{\rm fin}\}=0.
\end{equation}
\end{lemma}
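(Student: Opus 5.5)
Three steps: prove \eqref{eq:lax-Gamma-Lambda-short} directly from the basic brackets, obtain \eqref{eq:Delta-Lambda-f-action} from the $\Delta$ current algebra, and deduce \eqref{eq:lax-Hfin-commutant} by the chain rule.

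\textbf{Step 1: $\{\Gamma_{q1}(x),\Lambda_{p2}(y)\}=0$.} For $p\neq q$ the bracket vanishes outright, because \eqref{eq:gamma-pole-bracket}--\eqref{eq:gamma-group-bracket} carry $\delta_{qp}$. For $p=q$, write $\Lambda_q=\gamma_q\Gamma_q\gamma_q^{-1}-c_q\gamma_q'\gamma_q^{-1}$ and apply the Leibniz rule.

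From \eqref{eq:gamma-group-bracket}, $\{\Gamma_{q1}(x),\gamma_{q2}(y)\}=\gamma_{q2}C_{12}\delta_{xy}$. This gives $\{\Gamma_{q1},\gamma^{-1}_{q2}\}=-C_{12}\gamma^{-1}_{q2}\delta_{xy}$. Differentiating in $y$ gives $\{\Gamma_{q1},\gamma'_{q2}\}=\gamma'_{q2}C_{12}\delta_{xy}+\gamma_{q2}C_{12}\partial_y\delta_{xy}$.

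Collect the terms of $\{\Gamma_{q1},\gamma_q\Gamma_q\gamma_q^{-1}\}$:
\begin{itemize}
\item The $\gamma$ and $\gamma^{-1}$ variations give $\gamma_{q2}[C_{12},\Gamma_{q2}]\gamma_{q2}^{-1}\delta_{xy}$.
\item The $\Gamma$--$\Gamma$ bracket gives $\gamma_{q2}\bigl([C_{12},\Gamma_{q1}]\delta_{xy}-c_qC_{12}\delta'_{xy}\bigr)\gamma_{q2}^{-1}$.
\end{itemize}
Since $[C_{12},\Gamma_1+\Gamma_2]=0$, the two adjoint pieces cancel, leaving $-c_q\gamma_{q2}C_{12}\gamma_{q2}^{-1}\delta'_{xy}$.

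The term $-c_q\{\Gamma_{q1},\gamma'_{q2}\gamma^{-1}_{q2}\}$ produces:
\begin{itemize}
\item $-c_q\gamma'_{q2}C_{12}\gamma^{-1}_{q2}\delta_{xy}+c_q\gamma'_{q2}C_{12}\gamma^{-1}_{q2}\delta_{xy}$, which cancels;
\item $-c_q\gamma_{q2}C_{12}\gamma^{-1}_{q2}\partial_y\delta_{xy}$.
\end{itemize}
With the convention $\delta'_{xy}=\partial_x\delta_{xy}=-\partial_y\delta_{xy}$, this last term cancels the remaining $\delta'$ term above. I expect this sign bookkeeping to be the main obstacle. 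In particular, one must check that the convention implicit in \eqref{eq:gamma-pole-bracket} is the same one that makes \eqref{eq:mixed-pole-zero} hold. Indeed $\Delta_q=\Lambda_q^{\fh}$, so \eqref{eq:lax-Gamma-Lambda-short} strengthens \eqref{eq:mixed-pole-zero}, and I would fix the convention by demanding consistency with it.

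\textbf{Step 2: \eqref{eq:Delta-Lambda-f-action}.} Since $\Delta_q[\eta]=\int\pair{\eta}{\Lambda_q}$, the bracket $\{\Delta_q[\eta],\Lambda_q\}$ only needs the brackets of $\gamma_q$ with $\Lambda_q$. Those follow from the $\gamma$-dependence of $\Lambda_q$ (using Step 1 to discard the $\Gamma$ contributions).

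One expects the full current $\Lambda_q$ to satisfy the opposite-sign current algebra \eqref{eq:delta-pole-bracket} with $C^{\fg}$, of which \eqref{eq:delta-pole-bracket} is the $\fh$-projection. Given that, pair the first factor with $\eta\in\fh$ and project the second factor to $\ff$. The central term drops, because $\pair{\fh}{\ff}=0$ kills $\eta$ paired against $C^{\fg}$ projected to $\ff$. What remains is the adjoint term, and since $[\fh,\ff]\subset\ff$ it gives $[\eta,\Lambda_q^{\ff}]$ up to the sign fixed in Step 1. If deriving the full $\Lambda$ algebra is too heavy, I would instead compute directly from $\{\gamma,\gamma\}$ as given in Stedman's reduced brackets.

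\textbf{Step 3: conclusion \eqref{eq:lax-Hfin-commutant}.} For $\Gamma(z)$: it is linear in the $\Gamma_p$, and $H_{\rm fin}$ depends only on the $\Lambda_q$, so Step 1 and the chain rule give zero.

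For $\Delta(z)=\sum_p\Delta_p/(z-p)$: only $p=q$ contributes. By Step 2,
\[
\{\Delta_q[\eta],\textstyle\int\pair{\Lambda_q^{\ff}}{\Lambda_q^{\ff}}\}=2\int\pair{\Lambda_q^{\ff}}{[\eta,\Lambda_q^{\ff}]}=0
\]
by ad-invariance. Since $\eta$ is arbitrary, $\{\Delta_q(x),H_{\rm fin}\}=0$ pointwise, and summing over $q$ with weights $1/(z-q)$ finishes the proof.
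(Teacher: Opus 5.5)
Your proposal is correct and follows the paper's proof essentially step for step. The paper checks Step 1 in flow form, letting $\Gamma_q[\xi]$ act by $\delta_\xi\gamma_q=\gamma_q\xi$ and $\delta_\xi\Gamma_q=[\Gamma_q,\xi]+c_q\xi'$; this is equivalent to your Leibniz computation with $\delta'_{xy}=\partial_x\delta_{xy}$, the convention that also reproduces \eqref{eq:Delta-pole-gauge}. It then uses $\{\Delta_q[\eta],\Lambda_q\}=[\eta,\Lambda_q]-c_q\eta'$, projects to $\ff$, and concludes by ad-invariance, exactly as you do.

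The full-$\Lambda_q$ bracket you only ``expect'' follows at once from Step 1 together with $\{\gamma_q,\gamma_q\}=0$ (implicit in the paper's list of basic brackets). The flow of $\Delta_q[\eta]$ fixes $\Gamma_q$ and generates $\delta_\eta\gamma_q=\eta\gamma_q$, which gives that formula with exactly the sign in \eqref{eq:Delta-Lambda-f-action}, so your ``up to the sign'' hedge is unnecessary.
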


\begin{proof}
By the basic brackets \eqref{eq:gamma-pole-bracket}--\eqref{eq:gamma-group-bracket}, the smeared current $\Gamma_q[\xi]$ generates
\begin{equation*}
  \delta_\xi\gamma_q=\gamma_q\xi,
  \qquad
  \delta_\xi\Gamma_q=[\Gamma_q,\xi]+c_q\xi'.
\end{equation*}
Substituting directly into \eqref{eq:lax-Lambda} gives
\begin{equation*}
 \delta_\xi(\gamma_q\Gamma_q\gamma_q^{-1})
 =c_q\gamma_q\xi'\gamma_q^{-1},
 \qquad
 \delta_\xi(\gamma_q'\gamma_q^{-1})
 =\gamma_q\xi'\gamma_q^{-1},
\end{equation*}
The two terms cancel exactly, yielding \eqref{eq:lax-Gamma-Lambda-short}.

On the other hand, $\Delta_q=\Lambda_q^{\fh}$ generates the adjoint action of $H$. Smearing $\Delta_q$ with $\eta$, the same-pole bracket in the sign convention of this section gives the transformation of the full composite current
\begin{equation*}
  \{\Delta_q[\eta],\Lambda_q\}
  =[\eta,\Lambda_q]-c_q\eta'.
\end{equation*}
The final term is the affine central term and is $\fh$-valued. Using $[\fh,\ff]\subset\ff$ and $\Lambda_q^{\fh}=\Delta_q$, projection to $\ff$ gives
\begin{equation*}
  \{\Delta_q[\eta],\Lambda_q^{\ff}\}
  =\bigl([\eta,\Lambda_q]-c_q\eta'\bigr)^{\ff}
  =[\eta,\Lambda_q^{\ff}],
\end{equation*}
This is precisely \eqref{eq:Delta-Lambda-f-action}. By invariance of the bilinear form,
\begin{equation*}
 \left\{\Delta_q[\eta],
 \tfrac12\pair{\Lambda_q^{\ff}}{\Lambda_q^{\ff}}\right\}
 =\pair{\Lambda_q^{\ff}}{[\eta,\Lambda_q^{\ff}]}=0.
\end{equation*}
Combining this with \eqref{eq:lax-Hfin} gives \eqref{eq:lax-Hfin-commutant}.
\end{proof}

Every DS charge depends only on $\Gamma(\zeta)$ or $\Delta(\zeta)$ at a given zero. Hence \eqref{eq:lax-Hfin-commutant} immediately implies
\begin{equation}\label{eq:Q-Hfin-zero}
  \{Q^A,H_{\rm fin}\}=0.
\end{equation}
This does not mean $H_{\rm fin}=0$; it generally still drives the group variables $\gamma_q$, but it does not change the two Gaudin currents used to build the DS charges.

\subsection{Infinity Term and Commutation with DS Charges}

By \eqref{eq:lax-Hinfty}, for each pole
\begin{equation*}
  \frac{\delta H_\infty}{\delta\Gamma_q}
  =-\epsilon_\infty A_0^{\ff}.
\end{equation*}
For each finite pole, substituting the pole current bracket \eqref{eq:gamma-pole-bracket} gives
\begin{equation*}
  \{\Gamma_q(x),H_\infty\}
  =[\Gamma_q(x),\epsilon_\infty A_0^{\ff}]
   +\frac{k_q}{2\pi}
    \partial_x(\epsilon_\infty A_0^{\ff}).
\end{equation*}
Multiplying by $(z-q)^{-1}$ and summing over poles, the first term reorganizes into $[\Gamma(z),\epsilon_\infty A_0^{\ff}]$, while the second term reorganizes, using \eqref{eq:twist-one-form}, into a derivative term proportional to $\varphi(z)$. Thus
\begin{equation*}
\begin{aligned}
  \{\Gamma(z),H_\infty\}
  &=\sum_{q\in P_N}\frac{1}{z-q}\{\Gamma_q,H_\infty\}\\
  &=[\Gamma(z),\epsilon_\infty A_0^{\ff}]
   +\frac{1}{2\pi}\left(\sum_{q\in P_N}\frac{k_q}{z-q}\right)
    \partial_x(\epsilon_\infty A_0^{\ff})\\
  &=[\Gamma(z),\epsilon_\infty A_0^{\ff}]
   +\frac{\varphi(z)}{2\pi}
    \partial_x(\epsilon_\infty A_0^{\ff}).
\end{aligned}
\end{equation*}
At a finite zero $\zeta$, $\varphi(\zeta)=0$, so
\begin{equation*}
  \{j_\zeta^{(g)},H_\infty\}
  =[j_\zeta^{(g)},\epsilon_\infty A_0^{\ff}].
\end{equation*}
If $P$ is adjoint-invariant, then
$[\grad P(X),X]=0$, and therefore,
\begin{equation*}
\begin{aligned}
 \{Q_{P,\zeta}^{(g)},H_\infty\}
 &=\circleint
   \pair{\grad P(j_\zeta^{(g)})}
         {[j_\zeta^{(g)},\epsilon_\infty A_0^{\ff}]}
   \,\dd x
 =0.
\end{aligned}
\end{equation*}
The second sector is simpler: $H_\infty$ depends only on $\Gamma_q$, while
$\{\Delta,\Gamma\}=0$, so
\begin{equation*}
  \{Q_{p,\zeta}^{(h)},H_\infty\}=0.
\end{equation*}
Hence, for every DS charge constructed above,
\begin{equation}\label{eq:Q-Hinf-zero}
  \{Q^A,H_\infty\}=0.
\end{equation}

\subsection{The Zero Hamiltonian Term Reduces to Quadratic DS Charges}

The zero term has already been written in \eqref{eq:lax-HZ-quadratic} as
\begin{equation*}
  H_Z
  =\sum_{\zeta\in Z_{\rm fin}}\frac{d_\zeta}{2\pi}
   \left(\alpha_\zeta\widehat Q_\zeta^{(g)}
   -\beta_\zeta\widehat Q_\zeta^{(h)}\right),
\end{equation*}
The coefficients $d_\zeta\alpha_\zeta/(2\pi)$ and $d_\zeta\beta_\zeta/(2\pi)$ are constants fixed by standard boundary data and independent of phase-space variables. It remains only to identify $\widehat Q$ with the degree-two member of the DS charges in Section 5.

Let $\fs$ be a simple ideal of $\fg$ or $\fh$. A simple Lie algebra has exactly one exponent equal to $1$, so the DS labels contain a unique degree-two position $(a_2,0)$ with $m_{a_2}=1$. In this subsection we normalize the Evans tensor at this position by
\begin{equation*}
  d_\fs^{(a_2,0)}(X,X)=P_{2,\fs}(X):=\frac12\pair{X}{X}.
\end{equation*}
For each simple ideal, take the projected current and define
\begin{equation*}
  Q_{\zeta,\fs,2}^{(s)}
  :=\circleint P_{2,\fs}\bigl(j_{\zeta,\fs}^{(s)}(x)\bigr)\,\dd x,
  \qquad s\in\{g,h\}.
\end{equation*}
This is the notation for $Q_{\zeta,\fs,a_2,0}^{(s)}$ defined in Section 5 with this normalization; another nonzero normalization changes it only by a numerical factor. Since distinct simple ideals of a semisimple Lie algebra are mutually orthogonal, the quadratic form in \eqref{eq:lax-HZ} decomposes ideal by ideal, and hence
\begin{equation*}
  \widehat Q_\zeta^{(g)}
  =\sum_{I\in\mathcal I_g}Q_{\zeta,\fg_I,2}^{(g)},
  \qquad
  \widehat Q_\zeta^{(h)}
  =\sum_{J\in\mathcal I_h}Q_{\zeta,\fh_J,2}^{(h)}.
\end{equation*}
Thus $H_Z$ is a linear combination of finitely many degree-two DS charges, rather than an independent functional requiring a new calculation here.

Take any DS charge $Q^A$ constructed above. The same-zero and distinct-zero calculations of Sections 5 and 6 show that it Poisson-commutes with each quadratic DS charge on every simple ideal. Bilinearity of the Poisson bracket then immediately gives
\begin{equation}\label{eq:Q-HZ-zero}
  \{Q^A,H_Z\}=0.
\end{equation}
This step uses the assumption that $\alpha_\zeta$ and $\beta_\zeta$ are external constants in the standard boundary Hamiltonian; if these coefficients were phase-space functions, additional Poisson-bracket terms would appear and require separate analysis. Thus the conservation proof does not need to repeat the same-zero, distinct-zero, and mixed-sector distributional calculations from Section 6; the complete commutativity established there is exactly the required input.

\subsection{The Residual Constraint Term Vanishes by Strong Physicality}

We proved above that every DS charge is a strongly physical observable: for every externally prescribed periodic $\fh$-valued function $\mu(x)$,
\begin{equation}\label{eq:Q-constraint-strong}
  \{C_\infty[\mu],Q^A\}=0.
\end{equation}
When $a$ is an external gauge multiplier independent of phase space, substituting $\mu=a$ immediately gives
\begin{equation}\label{eq:Q-gauge-zero}
  \{Q^A,-C_\infty[a]\}=0.
\end{equation}
If gauge fixing instead makes $a=a[\Phi]$ depend on phase-space fields $\Phi$, apply the Leibniz rule to the smeared constraint and use the strong equality \eqref{eq:Q-constraint-strong} for field-independent test functions to obtain
\begin{equation*}
\begin{aligned}
  \{Q^A,C_\infty[a]\}
  &=\circleint\pair{\{Q^A,a(x)\}}{C_\infty(x)}\,\dd x,\\
  \{Q^A,-C_\infty[a]\}
  &=-\circleint\pair{\{Q^A,a(x)\}}{C_\infty(x)}\,\dd x.
\end{aligned}
\end{equation*}
The right-hand side explicitly contains $C_\infty$ and therefore vanishes on the constraint surface. This distinction means that external gauge multipliers give strong conservation, whereas field-dependent gauge fixing gives conservation only weakly on the constraint surface.

\subsection{Conservation Theorem}

\begin{proposition}[Commutativity and Hamiltonian Conservation of Regular-Zero DS Charges]\label{prop:main-result}
Let $Q^A,Q^B$ be any DS charges constructed in Section 5. For the standard reduced Hamiltonian \eqref{eq:lax-Hred} of Section 7,
\begin{equation}\label{eq:charge-conservation}
  \{Q^A,Q^B\}=0,
  \qquad
  \{Q^A,H_{\rm red}\}=0,
  \qquad
  \frac{\dd Q^A}{\dd t}=0.
\end{equation}
Poisson commutativity is always a strong equality; Hamiltonian conservation is also strong for externally prescribed gauge multipliers and holds on $C_\infty\weak0$ under field-dependent gauge fixing.
\end{proposition}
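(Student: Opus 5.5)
The proof assembles results already established; no new distributional computation is needed. I will treat the two claims in \eqref{eq:charge-conservation} separately and then deduce the third.

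\textbf{Commutativity.} Let $Q^A=Q^{(s)}_{\zeta,I,a,r}$ and $Q^B=Q^{(s')}_{\eta,J,b,r'}$. I split into three exhaustive cases.
\begin{enumerate}
\item If $s\neq s'$, the bracket vanishes by the mixed-sector identity \eqref{eq:mixed-current-zero}. By the chain rule, $\{Q^A,Q^B\}$ is an integral of gradients contracted against $\{\Gamma(\zeta),\Delta(\eta)\}=0$.
\item If $s=s'$ and $\zeta\neq\eta$, the bracket vanishes by the distinct-zero calculation of Section 6. The twist identity kills the $\delta'$ term, and the pointwise identity $[j,\grad P(j)]=0$ kills the ultralocal term. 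This holds for arbitrary invariant $P,R$, in particular the extended-by-zero DS tensors.
\item If $s=s'$ and $\zeta=\eta$, I use the same-zero bracket \eqref{eq:generic-zero-current-bracket}. The ultralocal part vanishes by \eqref{eq:Evans-delta-part}. The $\delta'$ part reduces to \eqref{eq:Evans-delta-prime}, and this is a total derivative \eqref{eq:Evans-total-derivative} by Evans compatibility of the DS tensors. It therefore integrates to zero in the periodic sector.
\end{enumerate}
When $I\neq J$ in case 3, the contraction $S$ vanishes identically, since the invariant form is orthogonal across ideals and the tensors are extended by zero. All three steps use only the Poisson structure on $\mathcal P_{\rm ext}$ and never invoke $C_\infty\weak0$, so the equality is strong.

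\textbf{Conservation.} I expand $H_{\rm red}$ by bilinearity according to \eqref{eq:lax-Hred} and cite the four term-by-term results.
\begin{enumerate}
\item $\{Q^A,H_{\rm fin}\}=0$ is \eqref{eq:Q-Hfin-zero}, which follows from \cref{lem:Hfin-commutes}.
\item $\{Q^A,H_\infty\}=0$ is \eqref{eq:Q-Hinf-zero}.
\item $\{Q^A,H_Z\}=0$ is \eqref{eq:Q-HZ-zero}. Here $H_Z$ is a constant-coefficient combination of degree-two DS charges, so the first part of the proposition applies.
\item For the constraint term, if $a$ is an external multiplier then \eqref{eq:Q-gauge-zero} gives a strong zero. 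If $a=a[\Phi]$ depends on the fields, the Leibniz computation leaves $-\int\pair{\{Q^A,a\}}{C_\infty}$, which vanishes weakly.
\end{enumerate}
Since $Q^A$ has no explicit time dependence, $\dd Q^A/\dd t=\{Q^A,H_{\rm red}\}$, and this gives the third equality with the stated strong/weak qualification.

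\textbf{Main obstacle.} The delicate step is case 3 of the commutativity argument. There I must confirm that the Evans condition, established for the Cartan tensors, transfers to the full invariant tensors via Chevalley restriction, as assumed in Section 5. I must also check that summing over simple ideals (needed because $H_Z$ involves $\widehat Q$, not a single ideal) introduces no cross $\delta'$ terms. I would handle the latter by noting that $C^{\fs}_{12}$ decomposes as a sum over ideals and that gradients of ideal-supported polynomials lie in their ideal.
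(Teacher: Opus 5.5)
Your proposal is correct and matches the paper's proof. Commutativity comes from the mixed-sector, distinct-zero, and same-zero (Evans) calculations of Sections 5--6, and conservation follows by bilinearity over the four terms of \eqref{eq:lax-Hred} using \eqref{eq:Q-Hfin-zero}, \eqref{eq:Q-Hinf-zero}, \eqref{eq:Q-HZ-zero}, \eqref{eq:Q-gauge-zero} and the absence of explicit time dependence; your worry about cross-ideal $\delta'$ terms in $H_Z$ is unnecessary, because $\widehat Q_\zeta^{(s)}$ is literally a sum of ideal-supported quadratic charges, so bilinearity reduces $\{Q^A,H_Z\}$ to brackets already covered by the commutativity part.
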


\begin{proof}
Here $\{Q^A,Q^B\}=0$ follows from the same-zero, distinct-zero, and mixed-sector calculations in Sections 5 and 6. For Hamiltonian conservation, bilinearity of the Poisson bracket and \eqref{eq:lax-Hred} give
\begin{equation*}
\begin{aligned}
 \{Q^A,H_{\rm red}\}
 &=\{Q^A,H_{\rm fin}\}
  +\{Q^A,H_\infty\}
  +\{Q^A,H_Z\}
  -\{Q^A,C_\infty[a]\}.
\end{aligned}
\end{equation*}
The four terms vanish by
\eqref{eq:Q-Hfin-zero}, \eqref{eq:Q-Hinf-zero}, \eqref{eq:Q-HZ-zero}, and \eqref{eq:Q-gauge-zero}, respectively. Hence
$\{Q^A,H_{\rm red}\}=0$.
These charges have no explicit time dependence, so Hamilton's equation immediately gives
$\dd Q^A/\dd t=0$.
\end{proof}

\section{Explicit Charge Formulas, Physical Algebra, and Scope}\label{sec:results}

Sections 5 and 6 completed the charge construction and Poisson-bracket calculations, and Section 7 completed the term-by-term conservation check for the Hamiltonian. This section records only the explicit formulas, the position of the charges in the physical Poisson algebra, and the scope of the result; matrix expansions and nonvanishing calculations are given in the appendices.

\subsection{Charges in the Two Sectors}

Fix a finite simple zero $\zeta\in Z_{\rm fin}$. If $\fs\subset\fg$ is a simple ideal, denote its exponents by $m_1,\ldots,m_{r_\fs}$ and its Coxeter number by $h_\fs$. For each principal Heisenberg label $a\in\{1,\ldots,r_\fs\}$ and $r\in\mathbb Z_{\ge0}$, the DS--Evans input gives a nonzero symmetric adjoint-invariant tensor
\begin{equation*}
 d_{\fs}^{(a,r)}\in
 \bigl(S^{d_{a,r}}\fs^*\bigr)^{\fs},
 \qquad
 d_{a,r}=m_a+r h_\fs+1.
\end{equation*}
Project $j_\zeta^{(g)}$ to $\fs$ and denote the result by $j_{\zeta,\fs}^{(g)}$. By \eqref{eq:jg-def}, it is a linear combination of pole currents
\begin{equation*}
 j_{\zeta,\fs}^{(g)}(x)
 =\frac{2\pi}{\varphi'(\zeta)}
   \sum_{q\in P_N}\frac{\Gamma_q(x)|_\fs}{\zeta-q}.
\end{equation*}
The regular-zero DS charge in the first sector is therefore
\begin{equation}\label{eq:result-Qg-explicit}
 Q_{\zeta,\fs,a,r}^{(g)}
 =\circleint
 d_{\fs}^{(a,r)}\!\left(
 j_{\zeta,\fs}^{(g)}(x),\ldots,j_{\zeta,\fs}^{(g)}(x)
 \right)\,\dd x.
\end{equation}

If $\fs\subset\fh$ is a simple ideal of the second sector, then by \eqref{eq:Delta-definition} and \eqref{eq:jh-def}
\begin{equation*}
 j_{\zeta,\fs}^{(h)}(x)
 =\frac{2\pi}{\varphi'(\zeta)}
 \sum_{q\in P_N}\frac{1}{\zeta-q}
 \left(
   -\frac{k_q}{2\pi}\gamma_q'(x)\gamma_q(x)^{-1}
   +\gamma_q(x)\Gamma_q(x)\gamma_q(x)^{-1}
 \right)^{\!\fs},
\end{equation*}
Here $(\cdot)^{\fs}$ means first project to $\fh$ and then to the chosen simple ideal $\fs$. The charge in the second sector is
\begin{equation}\label{eq:result-Qh-explicit}
 Q_{\zeta,\fs,a,r}^{(h)}
 =\circleint
 d_{\fs}^{(a,r)}\!\left(
 j_{\zeta,\fs}^{(h)}(x),\ldots,j_{\zeta,\fs}^{(h)}(x)
 \right)\,\dd x.
\end{equation}
These expressions depend only on $\Gamma_q$, $\gamma_q$, $\gamma_q^{-1}$, and $\gamma_q'$ at the same point $x$, so they are local differential functionals; no new phase-space variables or spatially nonlocal integrals are introduced.

\subsection{A Countable Family from a Fixed DS Index Column}

Fix
\[
(\zeta,s,\fs,a),\qquad s\in\{g,h\},
\]
Let only $r=0,1,2,\ldots$ vary and define
\begin{equation}\label{eq:result-countable-family}
  \mathscr Q_{\zeta,s,\fs,a}
  :=\left\{Q_{\zeta,\fs,a,r}^{(s)}
  \ \middle|\ r\in\mathbb Z_{\ge0}\right\}.
\end{equation}
The label set is countable, and the degrees $d_{a,r}=m_a+r h_\fs+1$ grow strictly with $r$. By \cref{prop:main-result}, the members of \eqref{eq:result-countable-family} pairwise Poisson-commute and are conserved by the standard Hamiltonian flow of Section 7.

The infinitude of labels must be distinguished from nonvanishing of the functionals. The DS input gives only nonzero tensors; to assert a countably infinite family of nonzero charges, one must verify that $j_\zeta^{(s)}$ is not identically zero in the chosen sector. Appendix \ref{app:nonzero} gives a two-pole $\mathfrak{sl}_2$ witness. We do not claim functional independence, nor do we claim that charges with different $r$ satisfy no polynomial relations.

\subsection{Position in the Physical Poisson Algebra}

Let $\mathcal I_\infty$ be the constraint ideal generated by all $C_\infty[\mu]$; the physical Poisson algebra is
\[
\mathcal A_{\rm phys}
=N_{\mathcal P_{\rm ext}}(\mathcal I_\infty)/\mathcal I_\infty,
\qquad
N_{\mathcal P_{\rm ext}}(\mathcal I_\infty)
=\{F\mid\{F,\mathcal I_\infty\}\subset\mathcal I_\infty\}.
\]
For field-independent periodic test functions, we proved above that
\[
\{C_\infty[\mu],Q^A\}=0
\]
is a strong equality on $\mathcal P_{\rm ext}$. If $b=b[\Phi]$ is field-dependent, then
\[
\{Q^A,C_\infty[b]\}
=\circleint\pair{\{Q^A,b(x)\}}{C_\infty(x)}\,\dd x
\]
belongs to $\mathcal I_\infty$, so $Q^A$ descends to $\mathcal A_{\rm phys}$. Let $\mathcal A_{\rm DS}$ be the unital subalgebra of the physical quotient generated by the equivalence classes of these charges and their finite products. Then
\[
\mathcal A_{\rm DS}\subset\mathcal A_{\rm phys},
\qquad
\{\mathcal A_{\rm DS},\mathcal A_{\rm DS}\}=0.
\]
Here ``local'' modifies only the density of a generator; a finite product is generally not a single local circle integral.

\subsection{Scope of the Result}

The result addresses classical local Poisson-commuting conserved charges at regular finite simple zeros. Its inputs are Stedman's finite-defect reduced Poisson algebra, the residual first-class constraint, and the standard simple-pole/simple-zero boundary Hamiltonian; existence of the DS--Evans tensors is used as finite-dimensional external input one simple factor at a time. Compared with the regular-zero affine Gaudin construction \cite{LacroixMagroVicedo2017}, the doubled theory additionally requires handling the same residual $H$ constraint and two Gaudin sectors with opposite signs that strongly Poisson-commute, and the calculations above accomplish both tasks.

We do not identify $\cA_{\rm DS}$ with the full physical local algebra, and we do not prove that it is a maximal Poisson-commuting subalgebra or a Liouville-complete family. Higher-order zeros require a jet version of the regular-zero construction; higher-order poles change the defect phase space and introduce Takiff/Weil--jet data, both outside the simple spectral-data setting. A directly computable realization, the $A$-type DS--Evans determinant family for $\mathfrak{sl}_N$, is placed in Appendix B.

\section{Conclusion}\label{sec:conclusion}

We constructed local Poisson-commuting conserved charges in the reduced doubled four-dimensional Chern--Simons theory under three explicit restrictions: the twist function has only finite simple zeros, the spatial direction lies in the periodic trivial-monodromy loop sector, and the boundary Hamiltonian has Stedman's simple-pole/simple-zero form with phase-space-independent weights. At a regular zero,
\[
\delta_\mu\Gamma(x,\zeta)=[\Gamma(x,\zeta),\mu(x)],
\qquad
\delta_\mu\Delta(x,\zeta)=[\Delta(x,\zeta),\mu(x)],
\]
Thus invariant polynomials of the residue currents strongly Poisson-commute pointwise with the residual constraint. The DS--Evans input supplies a compatible family of invariant tensors; Section 5 establishes the Evans criterion, Section 6 computes all three classes of evaluation-current brackets and proves complete Poisson commutativity, and Section 7 gives Hamiltonian conservation through the four-term decomposition and term-by-term calculation.

The precise statement is as follows: for every simple ideal, finite simple zero, and label satisfying the DS--Evans input, the resulting charge is a strongly physical local conserved functional on $\mathcal P_{\rm ext}$ and descends to $\mathcal A_{\rm phys}$. Under the nonvanishing condition in Appendix \ref{app:nonzero}, fixing one DS index column yields a countable family of nonzero charges. We do not prove functional independence, identify $\mathcal A_{\rm DS}$ with the full physical local algebra, or claim Liouville completeness.

Higher-order zeros require jet/Takiff currents, higher-order poles alter the defect Poisson structure, and nontrivial monodromy changes circle total-derivative terms. These issues, together with quantization and further reduction of concrete real sigma models, are left for future work.
\section*{Acknowledgments}
We are grateful to Prof. Wei-Zhong Zhao for his helpful discussions.
This work is supported by the National Natural Science Foundation of China (No. 12375004). 
\appendix

\section{Index Form of the Invariant-Tensor Identities}\label{app:tensors}

Let $d^{(m)}_{a_1\cdots a_m}$ be fully symmetric. Infinitesimal adjoint invariance is
\begin{equation*}
  \sum_{i=1}^{m}
  f_{b a_i}{}^{c}
  d^{(m)}_{a_1\cdots a_{i-1}c a_{i+1}\cdots a_m}=0.
\end{equation*}
Equivalently,
\begin{equation*}
  d^{(m)}_{c(a_2\cdots a_m}f_{a_1)b}{}^c=0.
\end{equation*}
After raising one index with the invariant form, the Evans identity is
\begin{equation*}
  d^{(m)}_{(a_1\cdots a_{m-1}}{}^{c}
  e^{(n)}_{b_1\cdots b_{n-2})b_{n-1}c}
  =d^{(m)}_{(a_1\cdots a_{m-1}}{}^{c}
  e^{(n)}_{b_1\cdots b_{n-2}b_{n-1})c}.
\end{equation*}
The right-hand side is fully symmetric in all $m+n-2$ free indices. This is precisely the symmetry needed to convert the $\delta'$ contribution in \eqref{eq:Evans-total-derivative} into a total derivative.

\section{\texorpdfstring{$A$}{A}-Type DS--Evans Charges: An Explicit Realization}\label{app:typeA}

This appendix specializes the abstract DS--Evans tensor construction to $\mathfrak{sl}_N(\mathbb C)$. The purpose is not to add assumptions, but to give an example that can be expanded directly and substituted into the regular-zero currents; physicality, Poisson commutativity, and conservation remain consequences of the general proof in the main text.

For $\fs=\mathfrak{sl}_N(\mathbb C)$, the exponents and Coxeter number are
\begin{equation*}
  1,2,\ldots,N-1,
  \qquad h_{\mathfrak{sl}_N}=N.
\end{equation*}
The allowed polynomial degrees are therefore
\begin{equation}\label{eq:typeA-degrees}
  n\ge2,
  \qquad n-1\not\equiv0\pmod N.
\end{equation}

For an $\mathfrak{sl}_N$ factor in the $g$ sector, define
\begin{equation*}
  X_\zeta(x)=2\pi\Gamma(x,\zeta)\big|_{\mathfrak{sl}_N}
  =\left.\varphi(z)\cL_x^{(g)}(x,z)\right|_{z=\zeta,\mathfrak{sl}_N},
\end{equation*}
and
\begin{equation*}
  D_\zeta(u,x)=\det(\id-uX_\zeta(x)).
\end{equation*}
For each allowed degree $n$, let
\begin{equation*}
  p_n=\frac{n-1}{N}
\end{equation*}
and define the normalized density
\begin{equation}\label{eq:typeA-generating-density}
  \widehat{\cK}_{n,\zeta}^{(g)}(x)
  :=-\frac{n}{p_n}
      \left[D_\zeta(u,x)^{p_n}\right]_{u^n},
\end{equation}
where $[F(u)]_{u^n}$ denotes the coefficient of $u^n$ in the formal expansion at $u=0$.

\begin{proposition}[$A$-Type Generating Family]\label{prop:typeA-family}
For degrees satisfying \eqref{eq:typeA-degrees}, the tensors obtained by polarizing \eqref{eq:typeA-generating-density} form an Evans-compatible family. Substituting these tensors into \eqref{eq:result-Qg-explicit} or \eqref{eq:result-Qh-explicit} gives an explicit matrix realization of the abstract DS--Evans charges.
\end{proposition}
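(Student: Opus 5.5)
The plan is to reduce the claim to the classical $\mathfrak{sl}_N$ (Gelfand--Dickey/KdV) structure, in which fractional powers of the characteristic polynomial produce the conserved densities, and then check Evans compatibility through the generating-function route of Evans et al. Write $\widehat{\cK}_n(X)$ for the polynomial on $\mathfrak{sl}_N$ defined by \eqref{eq:typeA-generating-density}; its polarization is the symmetric tensor $d^{(n)}$. Adjoint invariance is immediate, since $\det(\id-uX)$ is conjugation invariant and so is every coefficient of $D^{p_n}$. Nonvanishing for allowed degrees is checked on the principal (regular semisimple) element, where the eigenvalues are $N$-th roots of unity times $\lambda$, so that $D=1-u^N\lambda^N$. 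The factor $1/p_n$ is finite because $p_n\neq0$ for $n\ge2$. The cases $n-1\equiv0 \pmod N$ must be excluded, because there $p_n$ is an integer and the coefficient is a polynomial in lower invariants; this matches the affine-exponent degrees \eqref{eq:affine-exponents}.

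For compatibility I will work with gradients rather than tensors. By Chevalley restriction (as in Section 5) it is enough to verify condition \eqref{eq:Evans-compatibility} on the Cartan subalgebra, where $X=\mathrm{diag}(x_1,\dots,x_N)$ with $\sum x_i=0$, and $D=\prod_i(1-ux_i)$. The gradient of $\widehat{\cK}_n$ is computed by differentiating $D^{p}$:
\[
\grad_X D(u)^{p}=-p\,u\,D(u)^{p}(\id-uX)^{-1},
\]
projected to traceless part. The Evans condition is equivalent to the statement that the one-form $\pair{\grad \widehat{\cK}_m(j)}{\dd\,\grad \widehat{\cK}_n(j)}$, or more precisely the contracted form $S_{\cdots|c}\,j\cdots j\,\dd j^c$, is exact. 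I would therefore prove the stronger generating identity that
\[
\pair{\grad D(u)^{p}}{\dd\,\grad D(v)^{q}}
\]
differs from a symmetric (exact) expression only by terms that vanish after extracting coefficients with $p=(m-1)/N$ and $q=(n-1)/N$. On the Cartan algebra this reduces to rational-function identities in the $x_i$. The partial-fraction identity $\frac{1}{(1-ux)(1-vx)}=\frac{1}{u-v}\bigl(\frac{u}{1-ux}-\frac{v}{1-vx}\bigr)$ turns the pairing into combinations of $\dd\log D$ terms.

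The main obstacle is this last step: showing that the non-symmetric part cancels exactly when the exponents are tied to the degrees by $p_n=(n-1)/N$. The trace-removal term $-\frac1N\Tr(\cdot)\id$ in the traceless projection is what introduces the $1/N$. I would try first to mimic Evans--Hassan--MacKay--Mountain's proof, using a Lagrange-inversion/residue argument in $u$: write coefficient extraction as $\oint u^{-n-1}D^{p}\,\dd u$ and show that the offending antisymmetric term is a total $u$-derivative precisely when the exponent of $u$ matches $N p$. If that fails, the fallback is to verify the Evans identity by direct comparison with the known Hamiltonians of the $\mathfrak{sl}_N$ Gelfand--Dickey hierarchy restricted to their derivative-free part, and then invoke the Section 5 DS--Evans input.

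The final sentence of the proposition is then immediate. Substituting the polarized tensors into \eqref{eq:result-Qg-explicit} or \eqref{eq:result-Qh-explicit} uses $X_\zeta=\varphi'(\zeta)j_{\zeta}^{(g)}$, and the constant rescaling only multiplies each charge by a power of $\varphi'(\zeta)$. Physicality, commutativity and conservation therefore follow from \cref{prop:main-result}.
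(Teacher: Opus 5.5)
\textbf{Gap: Evans compatibility is not proved.} The ``stronger generating identity'' is never written down. The cancellation that should occur ``precisely when $p=(m-1)/N$, $q=(n-1)/N$'' is left as the main obstacle, with a citation as fallback. What you do establish (invariance, $p_n\neq0$, the rescaling $X_\zeta=\varphi'(\zeta)\,j^{(g)}_{\zeta,\fs}$) is routine. The fallback is in fact all the paper's proof does: it identifies the family with the determinant construction of Evans--Hassan--MacKay--Mountain and Evans, and asserts that $p_n=(n-1)/N$ is what makes the identity hold. If you take that route, make the identification explicit. With $\ell(k)=\det(k\id-X)=k^ND(1/k)$ one has $[D^{p_n}]_{u^n}=[\ell^{p_n}]_{k^{-1}}$ (expansion at $k=\infty$), which is the derivative-free Gelfand--Dickey density.

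Two side claims are also off. First, on an element with eigenvalues $\lambda\omega^k$ one has $D^{p}=(1-\lambda^Nu^N)^{p}$, whose $u^n$-coefficient vanishes unless $N\mid n$; already $\Tr X^2=0$ there for $N\ge3$. So this is not a nonvanishing witness. Second, the degrees $n=rN+1$ drop out because $D^r$ is a polynomial of degree $n-1$. It is not because they are polynomials in lower invariants, since every invariant of degree $>N$ is. Nonvanishing is not part of the proposition, so that check can simply be removed.

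\textbf{Completing the direct route.} Your instinct that the exponent enters through the trace is right, but it does not cancel an antisymmetric remainder. From $\det(\id-u(X+t\id))=(1-ut)^ND(u/(1-ut))$ one finds
$\left.\frac{\dd}{\dd t}\right|_{t=0}[D_{X+t\id}^{\,p}]_{u^n}=(n-1-Np)\,[D^{p}]_{u^{n-1}}$.
So $p=p_n$ makes the determinant formula, read on $\mathfrak{gl}_N$, invariant under $X\mapsto X+t\id$. Its $\mathfrak{gl}_N$-gradient $-p_n[uD^{p_n}(\id-uX)^{-1}]_{u^n}$ is then already traceless, and your projection term vanishes identically.

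This alone proves nothing. On $\mathfrak{sl}_3$, $\Tr X^3$ and $(\Tr X^2)^2$ also extend translation-invariantly, yet for the trace form $\pair{\grad(\Tr X^2)^2}{\dd\grad\Tr X^3}=8\Tr X^2\,\dd\Tr X^3$ is not closed. The missing content is closedness itself, i.e.\ commuting Hessians on $\mathfrak t$. One way to get it:
\begin{enumerate}
\item Write $X=\mathrm{diag}(\theta_1,\dots,\theta_N)$ and $h_n=[\ell^{p_n}]_{k^{-1}}$. Then $\partial h_n/\partial\theta_i=-p_n(\ell^{p_n})_+(\theta_i)$.
\item Hence the flow $\partial_t\theta_i=\partial_x(\partial h_n/\partial\theta_i)$, generated by $\int h_n$ through the Cartan part of the $\kappa\,\delta'$ bracket, is, at distinct $\theta_i$, the dispersionless Lax flow $\partial_t\ell=-p_n\{(\ell^{p_n})_+,\ell\}_{k,x}$ with $\{f,g\}_{k,x}=\partial_kf\,\partial_xg-\partial_xf\,\partial_kg$.
\item These flows commute by the usual zero-curvature argument.
\item For flows $\partial_t\theta=V\,\partial_x\theta$ with $V=\mathrm{Hess}\,h$, the $\partial_x^2\theta$-coefficient of the commutator of two flows is $[V_m,V_n]$ up to sign. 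This gives the Cartan Evans condition.
\item Your Chevalley step then lifts it to $\mathfrak{sl}_N$. At regular $X\in\mathfrak t$, the Hessian of an invariant preserves $\mathfrak t$ and acts on each root space $\fs_\alpha$ by the scalar $\alpha(\grad P(X))/\alpha(X)$.
\end{enumerate}
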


\begin{proof}
This is the determinant construction of \cite{EvansHassanMacKayMountain1999,Evans2001}, restated at regular zeros following \cite{LacroixMagroVicedo2017}. The exponent $p_n=(n-1)/N$ is exactly what makes the generated tensors satisfy the commutativity identity. Since $X_\zeta$ differs from the residue current only by a nonzero scalar, these functionals belong to the same homogeneous hierarchy.
\end{proof}

Let
\begin{equation*}
  J_k=\Tr X_\zeta^k.
\end{equation*}
Since $J_1=0$,
\begin{equation}\label{eq:det-log-expansion}
  D_\zeta(u)^{p_n}
  =\exp\left[-p_n\sum_{k\ge2}\frac{u^k}{k}J_k\right].
\end{equation}
Expanding \eqref{eq:det-log-expansion} gives
\begin{align}
  \widehat{\cK}_{2,\zeta}&=J_2,\nonumber\\
  \widehat{\cK}_{3,\zeta}&=J_3,\nonumber\\
  \widehat{\cK}_{4,\zeta}&=J_4-\frac{3}{2N}J_2^2,\nonumber\\
  \widehat{\cK}_{5,\zeta}&=J_5-\frac{10}{3N}J_2J_3,\label{eq:K5}\\
  \widehat{\cK}_{6,\zeta}&=J_6-\frac{5}{3N}J_3^2
     -\frac{15}{4N}J_2J_4+\frac{25}{8N^2}J_2^3.
  \label{eq:K6}
\end{align}
For fixed $N$, only degrees satisfying \eqref{eq:typeA-degrees} yield nonzero hierarchy members. If $n-1=rN$, then $p_n=r$ is an integer and $D_\zeta(u)^r$ has degree $rN=n-1$, so the coefficient of $u^n$ vanishes identically.

\begin{example}[$\mathfrak{sl}_2$]
For $\mathfrak{sl}_2$, the exponent modulo $2$ is $1$, so the allowed degrees are even. Every invariant polynomial on $\mathfrak{sl}_2$ is generated by $J_2$; hence the hierarchy consists of selected powers of $J_2$, normalized so that the integrated charges obey the compatibility hierarchy.
\end{example}

\begin{example}[$\mathfrak{sl}_3$]
For $\mathfrak{sl}_3$, the allowed degrees satisfy $n\equiv0,2\pmod3$. The first few are $2,3,5,6,8,9,\ldots$. Degree $4$ is absent, while degree $5$ corresponds to the corrected combination \eqref{eq:K5}.
\end{example}

For an $\mathfrak{sl}_N$ factor in the $\fh$ sector, make the replacement
\begin{equation*}
  X_\zeta=2\pi\Gamma(\zeta)
  \quad\longmapsto\quad
  Y_\zeta=2\pi\Delta(\zeta).
\end{equation*}
The same formulas then hold.

\subsection{Low-Order Expansion of the Determinant Generating Formula}\label{app:determinant}

For $X\in\mathfrak{sl}_N$,
\begin{equation*}
  \log\det(\id-uX)
  =\Tr\log(\id-uX)
  =-\sum_{k\ge2}\frac{u^k}{k}\Tr X^k.
\end{equation*}
Let $p=(n-1)/N$ and write $J_k=\Tr X^k$. Then
\begin{equation*}
  \det(\id-uX)^p
  =\exp\left[-p\left(
    \frac{u^2}{2}J_2+\frac{u^3}{3}J_3+\frac{u^4}{4}J_4+\cdots
  \right)\right].
\end{equation*}
For example, the coefficient of $u^4$ is
\begin{equation*}
  -\frac{p}{4}J_4+\frac{p^2}{8}J_2^2.
\end{equation*}
At $n=4$, $p=3/N$; multiplying by the normalization factor $-4/p$ in \eqref{eq:typeA-generating-density} gives
\begin{equation*}
  J_4-\frac{3}{2N}J_2^2.
\end{equation*}
The same expansion gives \eqref{eq:K5} and \eqref{eq:K6}.

\section{\texorpdfstring{Nonvanishing Witness: A Two-Pole $\mathfrak{sl}_2$ Example}{Nonvanishing Witness: A Two-Pole sl2 Example}}\label{app:nonzero}

Here we give a simple configuration satisfying the constraint for which the zero charge is nonzero. Take
\[
\fg=\fh=\mathfrak{sl}_2,\qquad \ff=0,
\]
Take two distinct finite poles $q_1\ne q_2$ with levels satisfying $k_1k_2(k_1+k_2)\ne0$. Then
\[
\varphi(z)=\frac{k_1}{z-q_1}+\frac{k_2}{z-q_2}
\]
There is a unique finite zero
\[
\zeta=\frac{k_1q_2+k_2q_1}{k_1+k_2},
\qquad
\varphi'(\zeta)=\frac{k_1+k_2}{(\zeta-q_1)(\zeta-q_2)}\ne0.
\]
Take any constant $X\in\mathfrak{sl}_2$ satisfying $\pair{X}{X}\ne0$ and set
\[
\Gamma_{q_1}(x)=X,\qquad \Gamma_{q_2}(x)=0,
\qquad
\gamma_{q_1}(x)=\gamma_{q_2}(x)=\id.
\]
By \eqref{eq:Delta-definition}, $\Delta_q=\Gamma_q$, so $C_\infty=0$ holds strongly, and
\[
j_\zeta^{(g)}=\frac{2\pi}{\varphi'(\zeta)(\zeta-q_1)}\,X\ne0.
\]
The invariant-polynomial algebra of $\mathfrak{sl}_2$ is generated by the quadratic Casimir. Hence every even-degree member of the standard DS--Evans family in this configuration is a nonzero power of $P_2(X)$ times a nonzero constant. In particular,
\[
Q_{\zeta,2}^{(g)}
=\frac12\circleint\pair{j_\zeta^{(g)}}{j_\zeta^{(g)}}\,\dd x
=\pi R\pair{j_\zeta^{(g)}}{j_\zeta^{(g)}}\ne0.
\]
This example proves only that nonvanishing can occur; it does not prove functional independence of charges at different levels.

\section{Detailed Derivation of the Ultralocal Charge Bracket}\label{app:ultralocal-derivation}

Equation \eqref{eq:Evans-delta-part} asserts that the $\delta$ term in the Poisson bracket of local integrals built from two adjoint-invariant polynomials vanishes identically. Starting from \eqref{eq:generic-zero-current-bracket}, this appendix expands the functional variation, Lie-algebra contraction, and gradient commutativity; the derivation does not require $\widetilde j=j$, Evans compatibility, or the constraint $C_\infty\weak0$.

Fix one sector and temporarily suppress the zero and sector labels. Let $\fs$ be the complex semisimple Lie algebra of that sector, choose a basis $\{T_a\}$, and write
\begin{equation*}
  j(x)=j^a(x)T_a,\qquad
  \widetilde j(x)=\widetilde j^a(x)T_a,\qquad
  \kappa_{ab}=\pair{T_a}{T_b},\qquad
  \kappa^{ac}\kappa_{cb}=\delta^a_b.
\end{equation*}
Here $\kappa$ is a nondegenerate complex bilinear invariant form, and the gradient definition involves no complex conjugation. For invariant polynomials of degrees $m$ and $n$, use the factorial-free normalization of the main text
\begin{equation*}
  P(X)=d^{(m)}_{a_1\cdots a_m}X^{a_1}\cdots X^{a_m},\qquad
  R(X)=e^{(n)}_{b_1\cdots b_n}X^{b_1}\cdots X^{b_n},
\end{equation*}
The corresponding local functionals are
\begin{equation*}
  Q_P[j]=\circleint P(j(x))\,\dd x,\qquad
  Q_R[j]=\circleint R(j(x))\,\dd x.
\end{equation*}
For every periodic variation $\eta(x)=\eta^a(x)T_a$,
\begin{equation}\label{eq:app58-functional-variation}
  \left.\frac{\dd}{\dd\varepsilon}Q_P[j+\varepsilon\eta]
  \right|_{\varepsilon=0}
  =\circleint \partial_aP(j(x))\eta^a(x)\,\dd x.
\end{equation}
The density contains no spatial derivatives of $j$, so no integration by parts is needed here. By the definition of the functional derivative,
\begin{align}
  A_a(x):=\frac{\delta P(x)}{\delta j^a(x)}
  &=m\,d^{(m)}_{a a_2\cdots a_m}
    j^{a_2}(x)\cdots j^{a_m}(x),\nonumber\\
  B_b(x):=\frac{\delta R(x)}{\delta j^b(x)}
  &=n\,e^{(n)}_{b b_2\cdots b_n}
    j^{b_2}(x)\cdots j^{b_n}(x).
  \label{eq:app58-functional-gradients}
\end{align}
Raising the index gives the Lie-algebra-valued gradients
\begin{equation}\label{eq:app58-raised-gradients}
  A^\sharp(x)=\kappa^{ab}A_b(x)T_a=\grad P(j(x)),\qquad
  B^\sharp(x)=\kappa^{ab}B_b(x)T_a=\grad R(j(x)).
\end{equation}
This agrees with the definition $\dd P_X(Y)=\pair{\grad P(X)}{Y}$.

First, invariance implies for any $X,\xi\in\fs$,
\begin{equation*}
  0=\left.\frac{\dd}{\dd t}P(\Ad_{\exp(t\xi)}X)\right|_{t=0}
   =\pair{\grad P(X)}{[\xi,X]}
   =\pair{[X,\grad P(X)]}{\xi}.
\end{equation*}
The last step uses invariance of the bilinear form. Nondegeneracy gives
\begin{equation}\label{eq:app58-centralizer}
  [X,\grad P(X)]=0,\qquad [X,\grad R(X)]=0.
\end{equation}
Thus both gradients lie in the centralizer of $X$, but for a general $X$ the centralizer need not be abelian, so a further argument is required.

The gradient of an invariant polynomial is equivariant under the adjoint action. For any element $g$ of the corresponding connected Lie group and any $Y\in\fs$,
\begin{align*}
  \pair{\grad P(\Ad_gX)}{Y}
  &=\left.\frac{\dd}{\dd\varepsilon}P(\Ad_gX+\varepsilon Y)
      \right|_{\varepsilon=0}\\
  &=\left.\frac{\dd}{\dd\varepsilon}P(X+\varepsilon\Ad_{g^{-1}}Y)
      \right|_{\varepsilon=0}\\
  &=\pair{\grad P(X)}{\Ad_{g^{-1}}Y}
   =\pair{\Ad_g\grad P(X)}{Y}.
\end{align*}
Nondegeneracy gives
\begin{equation}\label{eq:app58-gradient-equivariance}
  \grad P(\Ad_gX)=\Ad_g\grad P(X).
\end{equation}
Set $g=\exp(t\xi)$ and differentiate at $t=0$ to obtain the infinitesimal form
\begin{equation}\label{eq:app58-infinitesimal-equivariance}
  D(\grad P)_X\bigl([\xi,X]\bigr)=[\xi,\grad P(X)],
\end{equation}
Here $D(\grad P)_X$ is the linearization of the gradient map at $X$.

Now fix $X$ and take $\xi=\grad R(X)$. In this differentiation, $\xi$ is the constant determined by this fixed $X$ and does not vary with $t$. By \eqref{eq:app58-centralizer}, the direction $[\xi,X]$ on the left vanishes, so
\begin{equation*}
  [\grad R(X),\grad P(X)]
  =D(\grad P)_X\bigl([\grad R(X),X]\bigr)=0.
\end{equation*}
Hence for all $X\in\fs$,
\begin{equation}\label{eq:app58-gradient-commutator}
  [\grad P(X),\grad R(X)]=0.
\end{equation}
The argument applies equally to singular and nilpotent elements. Equivalently, \eqref{eq:app58-infinitesimal-equivariance} shows that $\grad P(X)$ commutes with every element of the centralizer of $X$, and $\grad R(X)$ is one such element.

The main text uses a shorter density argument: on the Zariski-dense set of regular semisimple elements, the centralizer is a Cartan subalgebra, so the two gradients commute; since the left-hand side of \eqref{eq:app58-gradient-commutator} is polynomial in $X$, the identity extends to all of $\fs$. The Cartan argument requires regular semisimple elements, whereas the equivariance proof above covers every $X$ directly.

Take the $\delta$ term in \eqref{eq:generic-zero-current-bracket} separately. The chain rule for functional Poisson brackets gives
\begin{align}
  \{Q_P,Q_R\}_{\delta}
  &=\alpha\int_{S^1_R}\dd x\int_{S^1_R}\dd y\,
      A_a(x)B_b(y)f^{ab}{}_{c}\widetilde j^c(x)\delta_{xy}
      \nonumber\\
  &=\alpha\circleint
      A_a(x)B_b(x)f^{ab}{}_{c}\widetilde j^c(x)\,\dd x.
  \label{eq:app58-delta-integration}
\end{align}
The second line uses only the evaluation property of the periodic $\delta$ distribution,
$\int_{S^1_R}B_b(y)\delta_{xy}\,\dd y=B_b(x)$, with no integration by parts and no boundary term.

To check the indices and signs in the contraction, write
\begin{equation*}
  [T_d,T_e]=f_{de}{}^fT_f,\qquad
  f^{ab}{}_{c}=\kappa^{ad}\kappa^{be}f_{de}{}^f\kappa_{fc}.
\end{equation*}
This agrees with the component convention in \eqref{eq:generic-zero-current-bracket}. Therefore
\begin{align}
  A_aB_b f^{ab}{}_{c}\widetilde j^c
  &=(\kappa^{da}A_a)(\kappa^{eb}B_b)
     f_{de}{}^f\kappa_{fc}\widetilde j^c\nonumber\\
  &=\pair{[A^\sharp,B^\sharp]}{\widetilde j}
   =\pair{[\grad P(j),\grad R(j)]}{\widetilde j}.
  \label{eq:app58-contraction}
\end{align}
If the symmetric tensors are substituted explicitly, the coefficient in the first line is
\begin{equation*}
  mn\,d^{(m)}_{a a_2\cdots a_m}
      e^{(n)}_{b b_2\cdots b_n}f^{ab}{}_{c}
      j^{a_2}\cdots j^{a_m}j^{b_2}\cdots j^{b_n}
      \widetilde j^c.
\end{equation*}
Equation \eqref{eq:app58-gradient-commutator} shows that this expression vanishes at every spatial point. Combining \eqref{eq:app58-delta-integration} and \eqref{eq:app58-contraction} gives
\begin{equation}\label{eq:app58-final}
  \{Q_P,Q_R\}_{\delta}
  =\alpha\circleint
      \pair{[\grad P(j(x)),\grad R(j(x))]}{\widetilde j(x)}\,\dd x
  =0,
\end{equation}
This is \eqref{eq:Evans-delta-part}.

The cancellation occurs before integration, so it is independent of the sign of $\alpha$, the relation between $\widetilde j$ and $j$, and the vanishing of periodic total derivatives. The complex semisimple case is included because the argument uses only nondegeneracy and invariance of the bilinear form and adjoint invariance of the polynomials. The remaining non-ultralocal contribution is
\begin{equation*}
  \{Q_P,Q_R\}_{\delta'}
  =\beta\circleint A_a(x)\partial_x B^a(x)\,\dd x,
\end{equation*}
Its vanishing requires the Evans compatibility discussed in the main text; equation \eqref{eq:Evans-delta-part} alone does not yet imply that the full charge bracket vanishes.

\end{document}